\newcommand{\barr}[1]{\overline{#1}}
\newcommand{\vect}[1]{\overrightarrow{#1}}
\newcommand{\sinn}[1]{\sin \left({#1}\right)}
\newcommand{\coss}[1]{\cos \left({#1}\right)}
\newcommand{\RA}{\ensuremath{R_1}}
\newcommand{\RB}{\ensuremath{R_2 }}
\newcommand{\postB}{\ensuremath{t_d}} 
\newcommand{\postC}{\ensuremath{t_d}} 
\newcommand{\ff}{face-to-face }
\newcommand{\arccc}[1]{
\widearc{#1}
}
\newcommand{\ignore}[1]{}
\newtheorem{theorem}{Theorem}[section]
\newtheorem{claim}[theorem]{Claim}
\newtheorem{lemma}[theorem]{Lemma}
\theoremstyle{definition}
\author{
J. Czyzowicz\affiliationmark{1}~\thanks{Research supported in part by NSERC of Canada.} 
\and
K. Georgiou\affiliationmark{2}~\thanks{Research supported in part by NSERC of Canada.}
\and 
E. Kranakis\affiliationmark{3}~\thanks{Research supported in part by NSERC of Canada.}
\and
L. Narayanan\affiliationmark{4}~\thanks{Research supported in part by NSERC of Canada.} \\
\and 
J. Opatrny\affiliationmark{4}~\thanks{Research supported in part by NSERC of Canada.}
\and
B. Vogtenhuber\affiliationmark{5}
}
\title[Evacuating Robots from a Disk Using Face-to-Face Communication]{Evacuating Robots from a Disk \\
Using Face-to-Face Communication\footnote{An extended abstract of this paper appeared in the proceedings of the 8th International Conference on Algorithms and Complexity (CIAC'15)~\cite{CzyzowiczGKNOV15}.
}
}
\affiliation{
  D\'{e}pt. d'Informatique, Universit\'{e} du Qu\'{e}bec en Outaouais,
Gatineau, QC, Canada\\
  Dept. of Mathematics, 
Ryerson University, 
Toronto, ON, Canada\\
School of Computer Science, Carleton University, Ottawa ON, Canada\\
Dept. of Comp. Science and Soft. Engineering, 
Concordia University, Montreal, QC,  Canada \\
Institute of Software Technology,
Graz University of Technology, Graz, Austria}
\keywords{Disk, Evacuation, Face-to-Face Model}
\begin{document}
\publicationdetails{22}{2020}{4}{4}{6198}
\maketitle
\begin{abstract}
Assume that two robots are located at the centre of a unit disk.
Their goal is to {\em evacuate} from the disk 
through an {\em exit} at an unknown location on the boundary of the disk.
At any time the robots can move anywhere they choose on the disk, independently of each other, with maximum speed $1$. 
The robots can cooperate by exchanging information 
whenever they meet.
We study algorithms for the two robots to minimize the {\em evacuation time}: the time when {\em both} robots reach the exit.

Czyzowicz et al.~(2014)
gave an algorithm defining trajectories for the two 
robots yielding evacuation time at most $5.740$ and also proved that any algorithm has evacuation time at least $3+ \frac{\pi}{4} + \sqrt{2} \approx 5.199$.
We improve both the upper and lower bound on the evacuation time of a unit disk. 
Namely, we present a new non-trivial algorithm whose evacuation time is at most $5.628$
and show that any algorithm has evacuation time at least $3+ \frac{\pi}{6} + \sqrt{3} \approx 5.255$.
To achieve the upper bound, we designed an algorithm which proposes a forced meeting between the two robots, even if the exit has not been found by either of them. We also show that such a strategy is provably optimal for a related problem of searching for an exit placed at the vertices of a regular hexagon. 
\end{abstract}

\section{Introduction}

The goal of traditional search problems is to find an object that is
located in a specific domain. This subject of research has a long history
and there is a plethora of models 
investigated in the mathematical and theoretical
computer science literature with emphasis
on probabilistic search in~\cite{stone1975theory},
game theoretic applications in~\cite{alpern2003theory},
cops and robbers in~\cite{anthony2011game},
classical pursuit and evasion in~\cite{nahin2012chases},
search problems and group testing in ~\cite{ahlswede1987search},
and many more. 

In this paper, we investigate the problem of searching for a stationary point target called an {\em exit} at an unknown location using two robots. 
This type of collaborative search is advantageous in that it reduces the required search time by distributing the search effort between the two robots. 
In previous work on collaborative search, the goal has generally been to minimize the time taken by the {\em first} robot to find the object of the search. 
In contrast, in the work at hand, we are interested in minimizing the time when the {\em last robot} finds the exit. 
In particular, suppose two robots are in the interior of a region with a single exit. 
The robots need to evacuate the region but the location of the exit is unknown to them. 
The robots can cooperate to search for the exit, but it is not enough for one robot to find the exit, we require {\em both} robots to reach  the exit as soon as possible. 

We study the problem of two robots that  start at the same time at the centre of a unit disk and attempt to reach an exit placed  at an unknown location on the boundary of the disk. 
At any time the robots can move anywhere they choose within the disk. 
Indeed, they can take short-cuts by moving in the interior of the disk if desired. 
We assume that their maximum speed is 1. 
The robots can communicate with each other only if they are at the same point at the same time: we call this communication model {\em \ff  communication}.  
Our goal is to schedule the trajectories of the robots so as to minimize the {\em evacuation time}, which is the time it takes both robots to reach the exit (for the worst case location of the exit). 

Our main contributions pertain to improved upper and lower bounds for the evacuation problem on the disc. The main ideas are derived by the study of a related evacuation problem in which the exit lies in one of the vertices of a regular hexagon. For the latter problem, we introduce a novel and non-intuitive search strategy in which searchers are forced to detour and meet if the exit is not found early enough. Surprisingly, we show this trajectory to induce optimal evacuation cost for that problem. The implications are two-fold. First, the lower bound we obtain applies directly to the main evacuation problem on the disc. Second, we use the same ideas of a detour and forced meeting to improve the best evacuation cost known (at the time when the result was first announced; see Section~\ref{sec: related work} for newer developments).

\subsection{Related work}
\label{sec: related work}


The most related work to ours is ~\cite{CGGKMP}, where the evacuation problem for a set of robots all
starting from the centre of a unit disk was introduced and studied.  Two communication models are introduced in \cite{CGGKMP}. In the {\em wireless} model,
the two robots can communicate at any time regardless of their locations. In particular, a robot that finds the exit can immediately communicate its location to the other
robot. The other model is called the  {\em non-wireless or local} model in \cite{CGGKMP}, and is the same as our \ff model:  two robots can only communicate when they are face to face,  that is, they are at the same point location at the same time. In \cite{CGGKMP}, for the case of 2 robots, an algorithm with  evacuation time $1 +\frac{2\pi}{3} + \sqrt{3} \approx 4.826$ is  given for the wireless model; this is shown to be optimal. For the \ff  model,  they prove an upper bound of 
$5.740$ and a lower bound of  $5.199$ on the evacuation time. 
Since the first announcement of our results in~\cite{CzyzowiczGKNOV15}, 
Brandt at al.~\cite{Watten2017} improved our upper bound from 5.628 to 5.625. Their contributions also pertain to a significant simplification of the search trajectories (that use detours but without forced meetings), along with a clever performance analysis. 
Very recently, \cite{disser2019evacuating} reported a further improvement of 5.6234 by employing and generalizing techniques introduced in~\cite{Watten2017}.
However, to the best of our knowledge, the lower bound we provide in this manuscript is the best currently known. 

Since the introduction of the problem in~\cite{CGGKMP}, a number of interesting variations emerged some of which are summarized in recent survey~\cite{CGK19search} (see also~\cite{flocchini2019distributed} for a collection of broadly related problems).
Some representative examples of related problems include 
an evacuation problem with speed bounds in the wireless model~\cite{lamprou2016fast},
evacuation from triangles~\cite{CzyzowiczKKNOS15,ChuangpishitMNO17},
evacuation from multiple rays~\cite{BrandtFRW20},
evacuation from graphs~\cite{Borowiecki0DK16},
search with terrain dependent speeds~\cite{CzyzowiczKKNOS17},
wireless evacuation from multiple exits~\cite{czyzowicz2018evacuating,PattanayakR0S18},
priority evacuation~\cite{CGKKKNOS18b,CzyzowiczGKKKNO18},
evacuation with faulty robots~\cite{czyzowicz2017evacuation,pattanayak2019chauffeuring,GKLPP19Algosensors} or with probabilistically faulty robots~\cite{bgmp2020probabilistically},	
evacuation with immobile agents~\cite{GKKa16,dmtcs:5528},
time energy tradeoffs for evacuation on the line \cite{czyzowiczICALP2019,kranakis2009time},
worst-case average-case tradeoffs for evacuation on the disc~\cite{chuangpishit2018average}, 
and searching graphs~\cite{angelopoulos2019expanding}.

In a different direction, Baeza-Yates {\em et al} posed the question of minimizing the worst-case trajectory of a single robot searching for a target point at an unknown location in the plane ~\cite{baezayates1993searching}. This was generalized to multiple robots in \cite{LS01}, and more recently has been studied  in \cite{Emekicalp2014,Lenzen2014}. However, in these papers, the robots cannot communicate, and moreover, the objective is for the first robot to find the target. 
Two seminal and influential papers (that appeared almost at the
same time) on probabilistic search that concern minimizing the {\em expected time}  for the robot to find the target are
\cite{beck1964linear}~and~\cite{bellman1963optimal}.
As for two surveys on search theory the interested reader is referred to~\cite{benkoski1991survey}~and~\cite{dobbie1968survey}.
The latter survey also contains 
an interesting classification of
search problems by search objectives, distribution of effort, point target
(stationary, large, moving), two-sided search, and other criteria.
The evacuation problem considered in our paper
is related to searching on a line, in that we are searching
on the boundary of a disk but with the
additional ability to make short-cuts in order
to enable the robots to meet sooner
and thus evacuate faster. 
Our problem is also related to the  {\em rendezvous problem}  and the problem of  {\em gathering} \cite{alpern1999asymmetric,PGNP2005}. Indeed our problem can be seen as a version of a rendezvous problem for three robots, where one of them remains stationary.

\subsection{Preliminaries and notation}
\label{subsec:prelim}

We assume that two robots \RA\ and \RB\  are initially at the centre of a disk with radius 1, and that there is an exit at some location $X$ on the boundary of the disk. 
The robots do not know $X$, but do know each other's algorithms. 
The robots move at a speed subject to a maximum speed, say 1. 
They cannot communicate except if they are at the same location at the same time. 
The {\em evacuation problem} is to define trajectories for the two robots that minimize the {\em evacuation time}.
All our results (upper and lower bounds) pertain to deterministic algorithms that work against a (worst case) deterministic adversary.

For two points $A$ and $B$ on the unit circle, the length of an arc $AB$ is denoted by $\arccc{AB}$, 
while the length of the corresponding chord (line segment) will be denoted by $\overline{AB}$ 
(arcs on the circle are always read clockwise, i.e., arc $AB$ together with arc $BA$ cover the whole circle). 
By $\angle{ABC}$ we denote the angle at $B$ in the triangle $ABC$. Finally by $\vect{AB}$ we denote the vector with initial point $A$ and terminal point $B$, .i.e. vector $B-A$ when $B,A$ are understood as points in the Cartesian plane.

\subsection{Outline, results of the paper and new improvements}

In \cite{CGGKMP} an algorithm is given defining a
trajectory for two robots in the \ff communication
model with evacuation time
$5.740$ and it is also proved that any such algorithm has evacuation 
time at least $3+ \frac{\pi}{4} + \sqrt{2} >  5.199$.

Our main contribution in this paper is to improve both the upper and lower bounds
on the evacuation time. Namely,
we give a new algorithm whose evacuation time is at most $5.628$
(see Section~\ref{sec:Evacuation Algorithms})
and also prove that any algorithm has evacuation 
time at least $3+ \frac{\pi}{6} + \sqrt{3} > 5.255$
(see Section~\ref{sec:Lower Bound}).
To prove our lower bound on the disk, we first give tight bounds for the problem of evacuating a regular hexagon where the exit is placed at an unknown vertex. 
We observe that, surprisingly, in our optimal evacuation algorithm for the hexagon, the two robots are forced to meet after visiting a subset of vertices, even if an exit has not been found at that time. 
We use the idea of such a forced meeting in the design of our disk evacuation algorithm, inducing this way an improvement of the upper bound from 5.740 to 5.628. 
It is still unknown  whether such a forced meeting is necessary. However, since the first announcement of the upper bound, Brandt at al.~\cite{Watten2017} proposed an elegant analysis of a simplified trajectory that avoids forced meetings, still improving the upper bound from 5.628 to 5.625. The authors in the latter paper even proposed a refinement of their technique involving multiple detours as the robots are searching for the exit. Their ideas were later materialized in~\cite{disser2019evacuating} that reported a further improvement of 5.6234.


\section{Evacuation Algorithms}
\label{sec:Evacuation Algorithms}

In this section we give two new evacuation algorithms for two robots in the  \ff model that take evacuation time approximately
$5.644$ and $5.628$ respectively. 
We begin by presenting  Algorithm $\mathcal A$ proposed by \cite{CGGKMP}
which has been shown to have evacuation time 
$5.740$.
Our goal is to understand the worst possible configuration for this algorithm, and to subsequently modify it accordingly so as to improve its performance. 

 All the algorithms we present follow the same general structure: 
 The two robots \RA\ and \RB\ start by moving together to an arbitrary point $A$ on the boundary of the disk. 
 Then \RA\ explores the arc $A'A$, where $A'$ is the antipodal point of $A$, by moving along some trajectory defined by the algorithm.
 At the same time, \RB\ explores the arc $AA'$, following a trajectory that is the reflection of \RA's trajectory.
 If either of the robots finds the exit, it immediately uses the {\em Meeting Protocol} defined below to meet the other robot (note that the other robot has not yet found the exit and hence keeps exploring). 
 After meeting, the two robots travel together on the shortest path to the exit, thereby completing the evacuation. 
 At all times, the two robots travel at unit speed. 
 Without loss of generality, we assume for our analysis that \RA~ finds the exit and then {\em catches}~\RB. 

 
 \paragraph{Meeting Protocol for $R_1$:}
If at any time $t_0$,  $\RA$  finds the exit at point $X$, it 
computes the shortest additional time $t$ such that $\RB$, after traveling 
distance $t_0+t$, is located at point $M$ satisfying $\barr{XM}=t$. 
Robot \RA\ moves along the segment $XM$. At time $t_0+t$ the two robots meet 
at $M$ and traverse  directly back to the exit at $X$ incurring total time cost 
$t_0+2t$.

\subsection{Evacuation Algorithm  \texorpdfstring{$\mathcal A$}{Lg} of \texorpdfstring{\cite{CGGKMP}}{Lg}}

We proceed by describing the trajectories of the two robots in Algorithm  
$\mathcal A$. As mentioned above, both robots start 
from the centre $O$ of the disk and move together to an arbitrary position $A$ on the 
boundary of the disk. $\RB$ then moves clockwise along the boundary of the disk up to distance $\pi$, see the
left-hand side of Figure~\ref{fig: old-algo}, and robot $\RA$ moves counter clockwise on the trajectory that is a reflection of $\RB$'s trajectory with respect to the 
line passing through $O$ and $A$. When $\RA$ finds the exit, 
it invokes the  {\it meeting protocol} in order to meet  $\RB$, after which the evacuation is completed. 

\begin{figure}[!ht]
                \centering
                \includegraphics[scale=0.5]{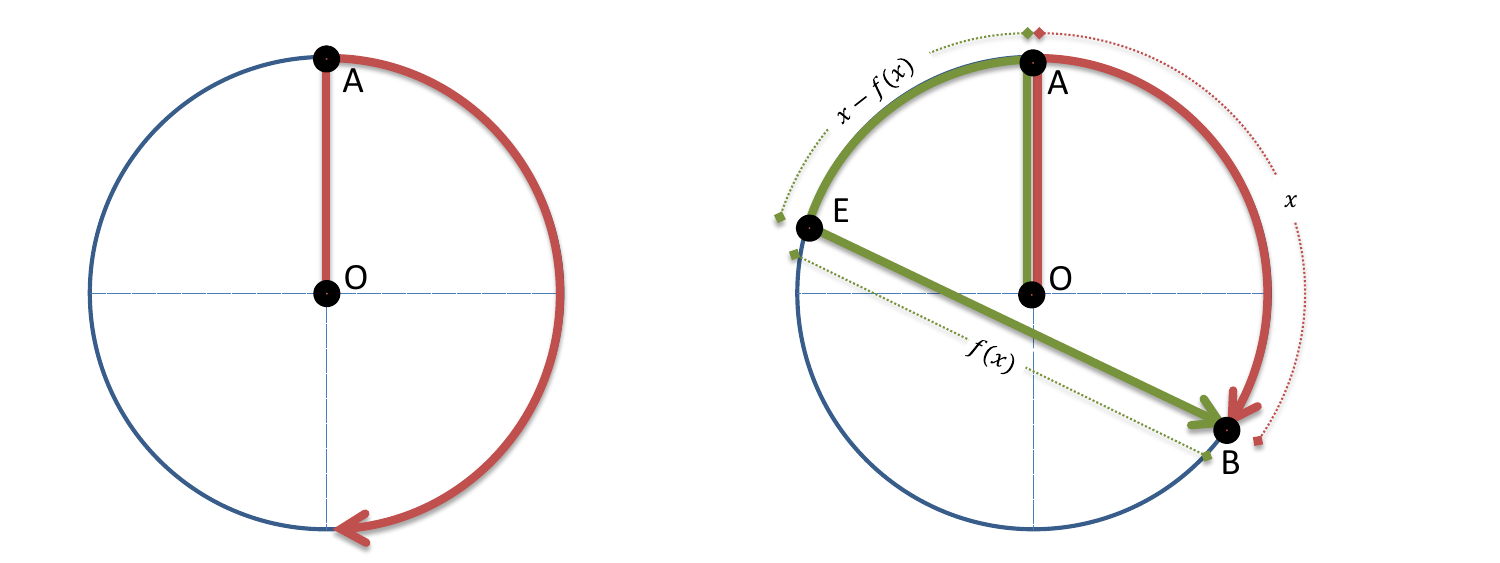}
				\caption{Evacuation Algorithm $\mathcal A$ with exit position $E$.  
				The trajectory of robot $\RB$ is depicted on the left. 
				The movement paths of robots $\RA,\RB$ are shown on the right, until the moment they meet at point $B$ on the circle. 
				}
                \label{fig: old-algo}
\end{figure}


The meeting-protocol trajectory of $\RA$ in Algorithm $\mathcal A$ is depicted in the right-hand side of Figure~\ref{fig: old-algo}, with exit point $E$ and meeting point $B$. 
Clearly, for the two robots to meet, we must have $\arccc{AB}=\arccc{EA}+\overline{EB}$. Next we want to analyze the performance of the algorithm, with respect to $x:=\arccc{AB}$, i.e. the length $x$ of the arc that $\RB$ travels, before it is met by $\RA$. We also set $f(x): = \overline{EB}$.
It follows that $\arccc{EA}=x-f(x)$, and therefore\footnote{We are using the elementary fact that in a unit circle, an arc of length $\alpha$ corresponds to a chord of length $2\sinn{\alpha/2}$}
\begin{equation}\label{equa: def f}
f(x)~=~ ~z,~ \textrm{ where }z \textrm{ is a solution of the equation } 
z= 2\sinn{x-\frac z2}.
\end{equation}
In other words,  $f(x)$\footnote{Uniqueness of the root of the equation defining $f(x)$ is an easy exercise.} 
is the length of the segment $EB$ that $\RA$ 
needs to travel in the interior of the disk after locating the exit at $E$, 
to meet $\RB$ at point $B$. 

Then the cost of Algorithm $\mathcal A$, given that the two robots meet at time $x$ after they together reached the boundary of the disk at $A$, is $1+x+f(x)$. Given that the distance $x-f(x)$ traveled by $\RA$ until finding the exit is between $0$ and $\pi$, it directly follows that $x$ can take any value between $0$ and $\pi$ as well. Hence, the worst case performance of Algorithm $\mathcal A$ is determined by
$$
\max_{x\in [0,\pi]} \{ x+f(x) \} 
$$
The next lemma, along with more details of its proof, follows from~\cite{CGGKMP}.
\begin{lemma}\label{lem: monotonicity of x+f(x)}
Expression $F(x):=x+f(x)$ attains its supremum at $x_0 \approx 2.85344$
(which is $ \approx 0.908279 \pi$). In particular, $F(x)$ is strictly increasing when $x \in [0, x_0]$ and strictly decreasing when $x \in [x_0, \pi]$.
\end{lemma}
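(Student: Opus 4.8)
The plan is to analyze the function $f(x)$ defined implicitly by Equation~\eqref{equa: def f}, namely $z = 2\sin(x - z/2)$, and then study $F(x) = x + f(x)$ by differentiation. First I would establish that for each $x \in [0,\pi]$ the equation $z = 2\sin(x - z/2)$ has a well-defined solution $f(x)$ in the relevant range (roughly $z \in [0, x]$, since $\RA$ travels arc $x - f(x) \in [0,\pi]$ before finding the exit), and that this solution depends smoothly on $x$. The existence and uniqueness should follow from a monotonicity/continuity argument: the map $z \mapsto z - 2\sin(x - z/2)$ is strictly increasing in $z$ on the relevant interval (its derivative is $1 + \cos(x - z/2) \geq 0$, and strictly positive except at isolated points), so there is a unique zero, and smoothness follows from the implicit function theorem since this derivative is positive.

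Next I would compute $f'(x)$ by implicit differentiation of $z = 2\sin(x - z/2)$: differentiating gives $f'(x) = 2\cos(x - f(x)/2)\,(1 - f'(x)/2)$, hence
\begin{equation*}
f'(x) = \frac{2\cos(x - f(x)/2)}{1 + \cos(x - f(x)/2)}.
\end{equation*}
Therefore $F'(x) = 1 + f'(x) = \dfrac{1 + 3\cos(x - f(x)/2)}{1 + \cos(x - f(x)/2)}$. Writing $\theta(x) := x - f(x)/2 = \arccc{EA} + \overline{EB}/2$ (which ranges over $[0,\pi]$ as $x$ does, and is itself increasing in $x$ since $\theta' = (1 + f'(x))/2 \cdot \ldots$ — more directly, $\theta = x - f(x)/2$ and one checks $\theta$ is strictly increasing), the sign of $F'(x)$ is exactly the sign of $1 + 3\cos\theta(x)$. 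This is positive when $\cos\theta > -1/3$, i.e. $\theta < \arccos(-1/3)$, and negative when $\theta > \arccos(-1/3)$. Since $\theta(x)$ is continuous and strictly increasing from $0$ to $\pi$, there is a unique $x_0$ with $\theta(x_0) = \arccos(-1/3)$, and $F$ is strictly increasing on $[0, x_0]$ and strictly decreasing on $[x_0, \pi]$, so $x_0$ is the location of the supremum (in fact a maximum).

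Finally I would pin down the numerical value: $x_0$ is characterized by $f(x_0) = 2\sin(\theta(x_0))$ with $\cos\theta(x_0) = -1/3$, so $\sin\theta(x_0) = \sqrt{8}/3 = 2\sqrt{2}/3$, giving $f(x_0) = 4\sqrt{2}/3$, and then $x_0 = \theta(x_0) + f(x_0)/2 = \arccos(-1/3) + 2\sqrt{2}/3 \approx 1.9106 + 0.9428 \approx 2.8534$, matching the stated value (and $2.8534/\pi \approx 0.9083$). The main obstacle is the preliminary bookkeeping: one must verify that the branch of solutions to the implicit equation we are tracking is the geometrically correct one (the shortest meeting time in the Meeting Protocol) and that $\theta(x)$ genuinely sweeps monotonically across $\arccos(-1/3)$ exactly once on $[0,\pi]$ — i.e. ruling out that $F'$ changes sign more than once. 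Everything downstream of a clean statement of $f'(x)$ is routine calculus; since this lemma and its proof are attributed to~\cite{CGGKMP}, I expect the argument there to proceed along exactly these lines.
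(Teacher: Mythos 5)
Your proposal is correct, and it is substantially more than the paper itself provides: the paper's ``proof'' of this lemma is a one-line sketch that simply points to a numerically generated plot of $F(x)$ (Figure~2) and defers to~\cite{CGGKMP}. Your route --- implicit differentiation of $z = 2\sin(x - z/2)$ to get $f'(x) = \frac{2\cos\theta}{1+\cos\theta}$ with $\theta(x) = x - f(x)/2$, hence $F'(x) = \frac{1+3\cos\theta(x)}{1+\cos\theta(x)}$, together with the observation that $\theta'(x) = \frac{1}{1+\cos\theta(x)} > 0$ so that $\theta$ sweeps $[0,\pi]$ monotonically and $1+3\cos\theta$ changes sign exactly once --- is sound, and it yields the exact characterization $x_0 = \arccos(-1/3) + \tfrac{2\sqrt{2}}{3} \approx 2.85344$ and $f(x_0) = \tfrac{4\sqrt{2}}{3}$, which checks out against the stated numerics. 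The bookkeeping you flag is also handled correctly: $z \mapsto z - 2\sin(x - z/2)$ has derivative $1+\cos(x-z/2) \ge 0$ vanishing only at isolated points, so the root is unique and the implicit function theorem applies away from $x=\pi$ (where $\theta \to \pi$ and $F' \to -\infty$, a boundary singularity that does not affect strict monotonicity on $[x_0,\pi]$). What your approach buys is a rigorous, closed-form argument in place of a picture; what the paper's approach buys is brevity, since the lemma is imported from prior work and only its consequences are needed downstream.
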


\begin{proof}
The behavior of $F(x)$, as $x$ ranges in $[0,\pi]$, is shown in Figure~\ref{fig: performance old-algo}.
\begin{figure}[!ht]
                \centering
                \includegraphics[scale=0.3]{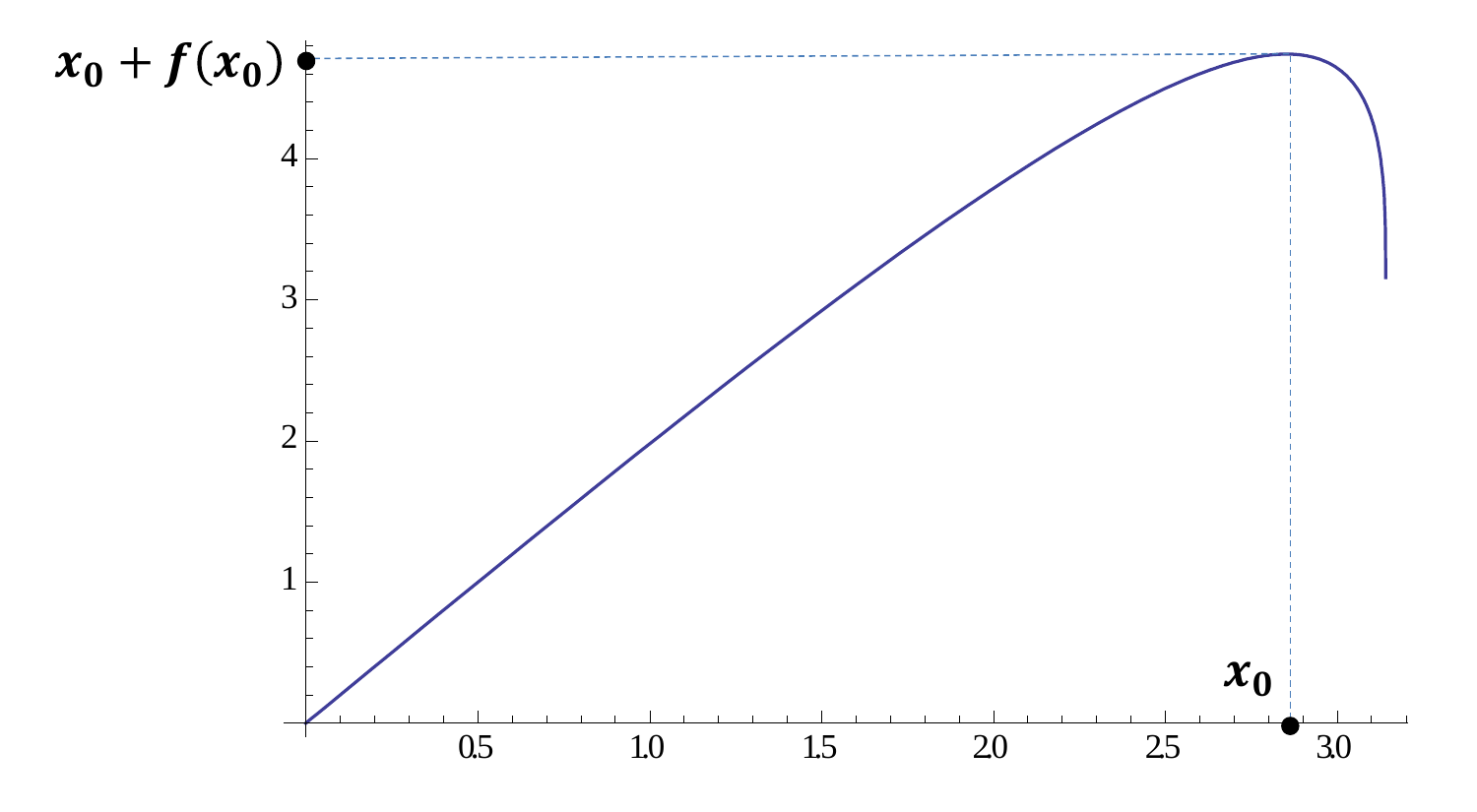}
                \caption{The performance of Algorithm $\mathcal A$ as a function of the meeting points of the robots.}
                \label{fig: performance old-algo}
\end{figure}
\end{proof}

By Lemma~\ref{lem: monotonicity of x+f(x)}, the evacuation time of Algorithm $\mathcal A$ is $1+x_0+f(x_0) < 5.740$. 
The worst case position of the exit is attained for $x_0-f(x_0) \approx 0.308 \pi$. 


\subsection{New evacuation algorithm \texorpdfstring{$\mathcal B(\chi, \phi)$}{Lg}}

We now show how to improve Algorithm $\mathcal A$ 
and obtain an evacuation time of at most~$5.644$.
The main idea for the improvement 
is to change the trajectory of the robots when the distance traveled on the boundary of the disk approaches the critical value $x_0$ of Lemma~\ref{lem: monotonicity of x+f(x)}. Informally, 
robot $\RB$ could meet $\RA$ earlier if it makes a linear detour inside the interior of 
the disk towards $\RA$ a little before traversing distance $x_0$. 

We describe a generic family of algorithms that realizes this idea. The specific
trajectory of each algorithm is determined by two parameters $\chi$ and $\phi$  where $\chi \in [\pi/2,x_0]$ and $\phi \in [0, f(\chi)/2 ]$, whose optimal values will be determined later. 
For ease of exposition, we assume that $\RA$ finds the exit. The trajectory of $\RB$ (assuming it has not yet met $\RA$) is  
partitioned into four phases that we call the \textit{deployment}, \textit{pre-detour}, \textit{detour} and \textit{post-detour} phase. 
The description of the phases relies on the left-hand side of Figure~\ref{fig: line-evac New}.\\

\noindent
{\bf Algorithm $\mathcal B (\chi, \phi)$}(with a linear detour).  
The phases of robot $\RB$'s trajectory (unless it is met by $\RA$ to go to the exit) are: \\ 
$\star$ \textit{Deployment phase:} 
Robot $\RB$ starts from the centre $O$ of the disk and moves to an arbitrary position $A$ on the boundary of the disk. \\ 
$\star$ \textit{Pre-detour phase:} 
$\RB$ moves clockwise along the boundary of the disk until having explored an arc of length $\chi$. 
Let $B$ be the point on the circle in which this phase ends.\\
$\star$ \textit{Detour phase:}
Let $D$ be the reflection of $B$ with respect to $AA'$ (where $A'$ is the antipodal point of $A$). 
$\RB$ moves on a straight line towards the interior of the disk and towards the side where $O$ lies, 
forming an angle of $\phi$ with line $BD$ until it reaches the line $AA'$ at point $C$. 
At $C$, $\RB$ turns around and follows the same straight line back to $B$. 
Note that by the restrictions on $\phi$, $C$ is indeed in the interior of the line segment $AA'$. \\
$\star$ \textit{Post-detour phase:}
Robot $\RB$ continues moving clockwise on the arc $BA'$. 
\\[0.5ex]
 The trajectory of $\RA$, until it finds the exit, is the reflection  
 of \RB's trajectory along the line $AA'$.
 When at time $t_0$, $\RA$ finds the exit, 
 it follows the Meeting Protocol defined above.\\


\begin{figure}[!ht]
                \centering
                \includegraphics[scale=0.5]{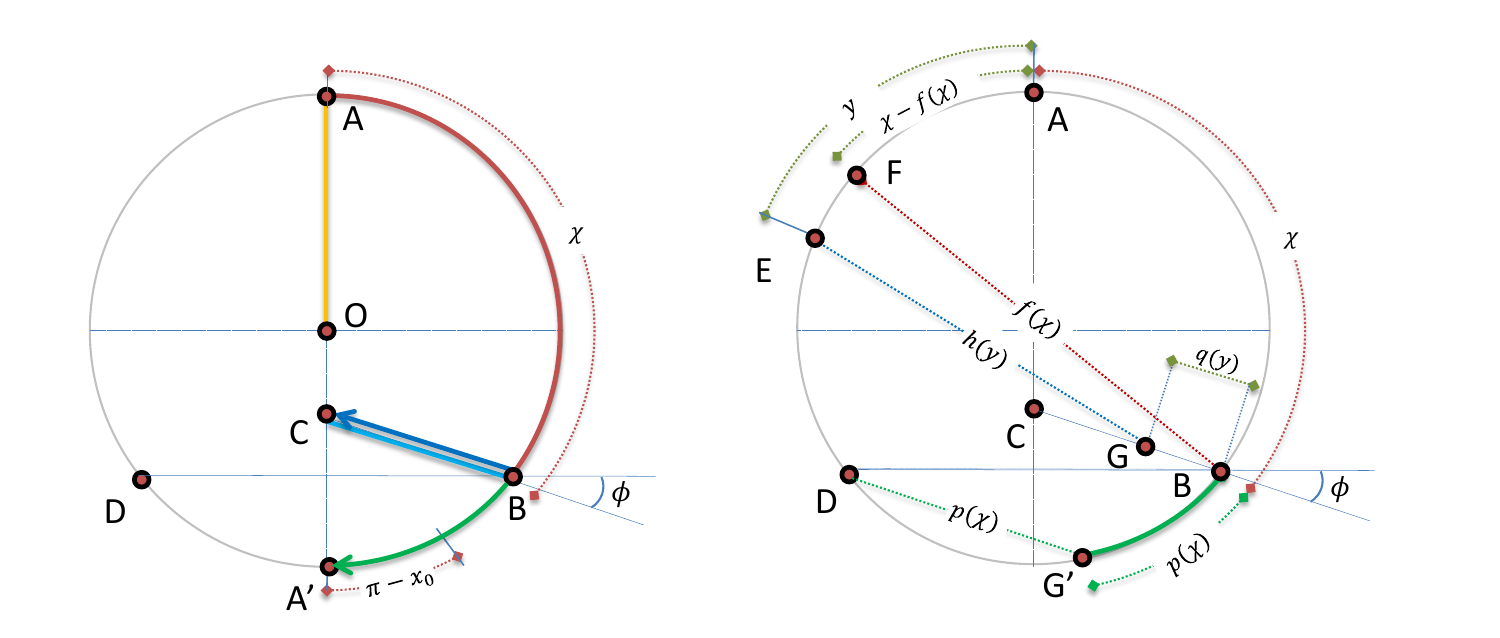}
                \caption{Illustrations for Algorithm $\mathcal B(\chi, \phi)$. }
                \label{fig: line-evac New}
\end{figure}

In what follows we assume that $\RA$ finds the exit on the arc $A'A$. Subsequently, it catches $\RB$ on its trajectory and they return together to the exit. 
There are three cases to consider as to where $\RB$ can be caught by \RA\ while moving on its trajectory. 
For all three cases, the reader can consult the right-hand side of Figure~\ref{fig: line-evac New}. As the time needed for the deployment phase is independent of where the exit is located, we ignore this extra cost of 1 during the
case distinction.

\begin{description}
\item{Case 1: $\RB$ is caught during its pre-detour phase:} The meeting point is anywhere on the arc $AB$. Recall that $\chi\leq x_0$, so by Lemma~\ref{lem: monotonicity of x+f(x)} the location $F$ of the exit on the arc $FA$ that maximizes the cost of $\mathcal B(\chi, \phi)$ is the one at 
at distance $\chi-f(\chi)$ from~$A$ (see right-hand side of Figure~\ref{fig: line-evac New}). The cost then is $\arccc{AB}+\barr{BF}=\arccc{FA}+2\barr{BF} =\chi + f(\chi)$. 

\item{Case 2: $\RB$ is caught during its detour phase:} Let $G$ be the point on $BC$ where the robots meet. Further, let $E$ be the position of the exit on the arc $A'A$, and let $y:=\arccc{EA}$. In the following, $h(y):=\barr{EG}$ denotes the length of the trajectory of $\RA$ in its attempt to catch $\RB$ after it finds the exit. Also, $q(y):=\barr{BG}$ denotes the distance that $\RB$ travels on $BC$ until it is caught by $\RA$. Note that the functions $h$ and $q$ also depend on $\chi$ and $\phi$; however, while those are fixed, $y$ varies with the position of the exit. 
	Lemma~\ref{lem: meeting condition for line algo} below states that $h(y)$ and $q(y)$ are well defined.
\begin{lemma}\label{lem: meeting condition for line algo}
	$y \in [\chi-f(\chi), \chi]$ if and only if the meeting point $G$ of the robots is on the line segment $BC$. Moreover, robot $\RB$ can be caught by $\RA$ only while moving from $B$ to $C$. 
\end{lemma}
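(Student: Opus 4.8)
The plan is to track, as the exit moves along $\RA$'s half of the circle, the point $G$ at which $\RA$ intercepts $\RB$, and to show that $G$ runs once along the segment $BC$ exactly when $y=\arccc{EA}$ lies in $[\chi-f(\chi),\chi]$. Two structural facts drive the argument. First, up to the moment $\RB$ reaches $B$ (time $\chi$ after deployment) the trajectories of both robots under $\mathcal{B}(\chi,\phi)$ coincide with their trajectories under Algorithm~$\mathcal{A}$; consequently, if the robots would not have met in $\mathcal{A}$ before $\RB$ reaches arc-distance $\chi$, then under $\mathcal{B}(\chi,\phi)$ they have not met before $\RB$ reaches $B$. Second, the map $x\mapsto x-f(x)$ is strictly increasing on $[0,x_0]$, which is immediate from Lemma~\ref{lem: monotonicity of x+f(x)} and the meeting identity $\arccc{AB}=\arccc{EA}+\barr{EB}$ of Algorithm~$\mathcal{A}$; moreover, for an exit at arc-distance $x-f(x)$ from $A$, Algorithm~$\mathcal{A}$ makes $\RA$ meet $\RB$ precisely at arc-distance $x$. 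Combining these, under $\mathcal{B}(\chi,\phi)$ and for every $y\ge\chi-f(\chi)$, robot $\RA$ catches $\RB$ no earlier than $\RB$'s arrival at $B$, with equality exactly when $y=\chi-f(\chi)$.

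I would next pin down the two endpoints. When $y=\chi-f(\chi)$, the previous paragraph already gives $G=B$. When $y=\chi$, the exit $E$ equals the reflection $D$ of $B$ across $AA'$, so $\RA$ finds it at time $\chi$, exactly when $\RB$ is at $B$, and the Meeting Protocol then selects the point $M$ on $\RB$'s detour ray (traversed at unit speed out of $B$) with $\barr{DM}=\barr{BM}$. The set of points equidistant from $B$ and $D$ is the perpendicular bisector of $BD$, which is the line $AA'$; and by construction the detour ray out of $B$ meets $AA'$ precisely at $C$. Hence $M=C$, and $\RA$ reaches it from $D$ in time $\barr{DC}=\barr{BC}$, arriving together with $\RB$. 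So $G=B$ at $y=\chi-f(\chi)$ and $G=C$ at $y=\chi$.

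For the interior values $y\in(\chi-f(\chi),\chi)$, robot $\RA$ still finds the exit at time $y<\chi$ while on the boundary, and the interception time $t$ prescribed by the Meeting Protocol is the smallest $t\ge0$ such that $\RB$'s position at time $y+t$ is at distance $t$ from $E$; since $\RB$ moves at unit speed, $t-d(t)$ is non-decreasing in $t$ (where $d(t)$ is that distance), so $t$ is unique and well defined. Evaluating at $t=\chi+\barr{BC}-y$ and using $\barr{EC}\le(\chi-y)+\barr{BC}$ --- which follows from $\barr{EC}=\barr{E'C}$ (since $C$ lies on $AA'$, with $E'$ the reflection of $E$, equivalently $\RB$'s location at time $y$), the triangle inequality $\barr{E'C}\le\barr{E'B}+\barr{BC}$, and the chord bound $\barr{E'B}\le\arccc{E'B}=\chi-y$ --- shows the interception happens no later than $\RB$'s arrival at $C$; together with the first paragraph (it happens at or after $\RB$'s arrival at $B$) and continuity of $G$ in $y$ anchored at the endpoints $B$ and $C$, this places $G$ on the leg $B\to C$. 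Conversely, if $y<\chi-f(\chi)$ then the meeting is as in Algorithm~$\mathcal{A}$ up to $\RB$'s passage through $B$ and hence occurs at arc-distance strictly below $\chi$, i.e.\ in the pre-detour phase, so $G\notin BC$; and if $y>\chi$ then at time $y$ robot $\RA$ is inside its own detour and cannot discover the exit until it has returned to $D$, that is, not before time $y+2\barr{BC}$, by which moment $\RB$ has already traversed the whole detour $B\to C\to B$, so again $G\notin BC$. This proves the equivalence. Finally, since $C\in AA'$ it is fixed by the reflection across $AA'$, so $\RA$ and $\RB$ are simultaneously at $C$ at time $\chi+\barr{BC}$; thus if $\RB$ has not been caught before then, the two robots are already together at $C$ (the forced meeting), and the return leg $C\to B$ can never be where a first interception takes place, which is the last assertion.

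The step I expect to be the main obstacle is making the middle regime fully rigorous: one must confirm that the Meeting Protocol really yields an interception lying on $BC$ (the bound $\barr{EC}\le(\chi-y)+\barr{BC}$ is precisely what keeps $\RB$ from slipping past $C$ uncaught), that the interception point depends continuously on $y$, and that it does not jump; granting these, the two explicit endpoint evaluations and the Intermediate Value Theorem give that $G$ sweeps all of $BC$ exactly for $y\in[\chi-f(\chi),\chi]$, and the functions $h(y)=\barr{EG}$ and $q(y)=\barr{BG}$ are correspondingly well defined there.
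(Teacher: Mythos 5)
Your proof is correct and reaches the paper's conclusions, but the engine of the argument for the crucial containment step is genuinely different. The paper proceeds qualitatively: it notes that $\phi\le f(\chi)/2=\arccc{DF}/2$ forces $C$ into the triangle $FDB$, so that during the detour $\RB$ is approaching any exit on the arc between $D$ and $F$, and then combines the two endpoint evaluations $q(\chi-f(\chi))=0$ and $q(\chi)=\barr{BC}$ with continuity of $h$ and the mean value theorem. You instead bound the interception \emph{time} from both sides: from below by the observation that the trajectories of $\mathcal B(\chi,\phi)$ coincide with those of Algorithm $\mathcal A$ until time $\chi$, together with the monotonicity of $x\mapsto x-f(x)$; and from above by the explicit estimate $\barr{EC}\le(\chi-y)+\barr{BC}$, obtained by reflecting $E$ across $AA'$, applying the triangle inequality, and bounding chord by arc. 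This two-sided bound places the interception in the window during which $\RB$ traverses the leg $B\to C$ for \emph{every} $y$ in the interval, which is arguably tighter than the paper's ``$\RB$ approaches the exit'' argument (the latter shows $t-d(t)$ increases along the detour but does not by itself certify it reaches zero before $\RB$ arrives at $C$). Your endpoint analysis at $y=\chi$ via the perpendicular bisector of $BD$, the two converse directions, and the exclusion of the return leg $C\to B$ via the forced meeting at $C$ are all sound. One small inaccuracy to fix: the strict monotonicity of $x-f(x)$ is not ``immediate from Lemma~\ref{lem: monotonicity of x+f(x)}'', which concerns $x+f(x)$; the fact does hold, but you should derive it from the defining equation $f(x)=2\sinn{x-f(x)/2}$, whose implicit differentiation gives $\bigl(x-f(x)\bigr)'=\frac{1-\coss{x-f(x)/2}}{1+\coss{x-f(x)/2}}\ge 0$, or from the elementary observation that a later-discovered exit cannot yield an earlier meeting under the Meeting Protocol.
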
 
\begin{proof}
If the exit is located at point $F$ (i.e. $y=\chi-f(\chi)$), then the meeting point is $B$. We also observe that if the exit-position coincides with $D$ (i.e. $y=\chi$), then the meeting point is $C$. 
Recall that $\phi\leq f(\chi)/2 = \arccc{DF}/2$, and hence point $C$ is in the interior or on the boundary of the triangle $FDB$ as it is depicted in right-hand side of Figure~\ref{fig: line-evac New}. Therefore, after time $\chi$, robot $\RB$ would approach the exit if it was located anywhere on the arc ${DE}$. In particular, if $y \in [\chi-f(\chi),\chi]$ then the meeting point for Algorithm $\mathcal A$ would be on the 
arc $BA'$. In Algorithm $\mathcal B(\chi, \phi)$, $\RB$ has a trajectory that brings it closer to the exit. This guarantees that the meeting point $G$ always exists, and it lies in the 
line segment $BC$ (we will soon derive a closed formula relating $\arccc{EA}$ and $\barr{BG}$). 
The previous argument guarantees that $h(y)$ is continuous and strictly decreasing in $y$.  
Notice that $\arccc{EA}+\barr{EG}=\arccc{AB}+\barr{BG}$ (since the two robots start from the same position $A$), which means that 
\begin{equation}\label{equa: meeting condition line interior trajectory}
y+h(y)=\chi+q(y)
\end{equation}
 Hence, $q(y)= y+h(y)-\chi$, and as already explained, $q(\chi-f(\chi))=0$ and $q(\chi)=\barr{BC}$. By the mean value theorem, all values between 0 and $\barr{BC}$ are attainable for $q(y)$ and are attained while $y$ ranges in $[\chi-f(\chi), \chi]$. \hfill 
\end{proof}

We conclude that if the exit is located at point $E$, then the cost of the algorithm is $y+2h(y)$. Hence, in case 2, the cost of the algorithm is at most
$$
\max_{y \in [\chi - f(\chi) , \chi] } \{ y + 2 h(y) \}.
$$
Function $h(y)$ is calculated later in Lemma~\ref{lem: calculation of h, line interior}. 
We emphasize again that $h(y)$ and $q(y)$ also depend on the fixed parameters $\chi$ and $\phi$. 
\item{Case 3: $\RB$ is caught during its post-detour phase:} 
Due to the already searched domain of the pre-detour phase, in this case the exit lies in the interior of the arc $A'D$ or coincides with $A'$. 
At time $\postB=\chi+2q(\chi)$, robots $\RA$ and $\RB$ are located at points $D$ and $B$, respectively. Then they move towards each other on the arc $BD$ until $\RA$ finds the exit. 
Note that, since $\barr{DB}/2 = \sinn{\chi}$, 
we have $q(\chi)=\sinn{\chi}/\coss{\phi}$. 
By the monotonicity of the cost of Algorithm $\mathcal A$, see also Figure~\ref{fig: performance old-algo}, we have that the closer the exit is to $D$, the  higher is the cost of the evacuation algorithm. In the limit (as the position of the exit approaches $D$), the cost of case 3 approaches $\postB$ plus the time it takes $\RA$ to catch $\RB$ if the exit was located at $D$, and if they started moving from points $D$ and $B$ respectively. Let $G'$ be the meeting point on the arc $BD$ in this case, i.e. $\barr{DG'}=\arccc{BG'}$. 
We define $p(x)$ to be the distance that $\RA$ needs to travel in the interior of the disk to catch $\RB$, if the exit is located at distance $x$ from $A$. Clearly\footnote{Uniqueness of the root of the equation defining $p(x)$ is an easy exercise.} 
\begin{equation}\label{equa: def g}
p(x)~:=~ \textrm{unique}~z~\textrm{satisfying}~ z= 2\sinn{x+\frac z2}.
\end{equation}
Note also that $\barr{DG'} = p(\chi)$ so that the total cost in this case is at most  
$$\postB+2p(\chi) = \chi+2\sinn{\chi}/\coss{\phi} + 2p(\chi).$$
\end{description}

The following two lemmas summarize the above analysis and express $h(y)$ in explicit form (in dependence of $\chi$ and $\phi$), respectively.

\begin{lemma}\label{lem: line algo performance}
	For fixed constants $\chi$ and $\phi$ 
	the evacuation time of Algorithm $\mathcal B(\chi,\phi)$ is 
\begin{equation}\label{eqa: line algo performance}
1+\max
\left\{
\chi+f(\chi),~~
\max_{y \in [\chi - f(\chi) , \chi] } \{ y + 2 h(y) \},~~
\chi+2\sinn{\chi}/\coss{\phi} + 2p(\chi)
\right\},
\end{equation}
where $h(y)$ (that also depends on the choice of $\chi$ and $\phi$) denotes the time that a robot needs from the moment it finds the exit until it meets the other robot when following the meeting protocol.
\end{lemma}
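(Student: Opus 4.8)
The plan is to read off the worst-case evacuation time as the supremum, over all exit positions, of the time for both robots to reach the exit, and to show that this supremum is governed by exactly the three cases already analysed. First I would note that the deployment phase costs $1$ regardless of where the exit is, so it suffices to bound the post-deployment cost and add $1$ at the end. By the reflective symmetry of the two trajectories with respect to $AA'$, I may assume without loss of generality that it is \RA\ that locates the exit; \RA\ then executes the Meeting Protocol, and \RB\ is caught at some point of its post-deployment trajectory, which consists of the pre-detour arc $AB$, the detour polyline $B\to C\to B$, and the post-detour arc $BA'$.

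Next I would argue that these three portions give an exhaustive case split and identify the corresponding ranges of exit positions. A meeting on $AB$ is Case~1; by Lemma~\ref{lem: meeting condition for line algo} a meeting on the detour occurs only on the leg $B\to C$ and precisely when $y:=\arccc{EA}\in[\chi-f(\chi),\chi]$, which is Case~2; all remaining exit positions lie on the arc $A'D$ (up to the limiting point $D$) and give Case~3. For Case~1, since \RA\ and \RB\ behave exactly as in Algorithm $\mathcal A$ until \RB\ reaches $B$, a meeting at arc-distance $x\le\chi$ from $A$ costs $F(x)=x+f(x)$; as $\chi\le x_0$, Lemma~\ref{lem: monotonicity of x+f(x)} gives that this is maximised at $x=\chi$, so the Case~1 supremum is $\chi+f(\chi)$ and it is attained. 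For Case~2, using that the two robots travel equal distances from $A$ (equation~\eqref{equa: meeting condition line interior trajectory}) and that $h(y)$ is continuous, the cost when the exit is at $E$ equals $y+2h(y)$; hence the Case~2 supremum is $\sup_{y\in[\chi-f(\chi),\chi]}\{y+2h(y)\}$, again attained by continuity.

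For Case~3 I would work from the configuration at time $\postB=\chi+2q(\chi)=\chi+2\sinn{\chi}/\coss{\phi}$: at this instant \RA\ is at $D$, \RB\ is at $B$, and both then move at unit speed toward $A'$ along the arcs $DA'$ and $BA'$. If the exit lies at arc-distance $\delta\in(0,\pi-\chi]$ beyond $D$, a short Meeting-Protocol computation (equating the distances travelled from this configuration and using $\barr{DG'}=p(\chi)$ with $p$ as in~\eqref{equa: def g}) gives total cost $\postB+\delta+2p(\chi+\delta)$. The crux of the whole lemma is to show this quantity is non-increasing in $\delta$, so its supremum is the limit $\postB+2p(\chi)=\chi+2\sinn{\chi}/\coss{\phi}+2p(\chi)$ as $\delta\to 0^+$; this is exactly where the hypothesis $\chi\ge\pi/2$ is used, since one checks that $p'(\cdot)\le -1/2$ on the relevant range (equivalently, the cosine appearing in the implicit differentiation of~\eqref{equa: def g} stays $\le -1/3$ because its argument $(\chi+\delta)+p(\chi+\delta)/2$ is at least $\pi/2+p(\pi/2)/2>\arcsinn{\,\cdot\,}$-threshold). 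I expect this monotonicity to be the only genuinely non-routine step; everything else is bookkeeping.

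Finally, I would take the maximum of the three case-suprema and re-add the deployment cost $1$. Since the Case~1 and Case~2 bounds are attained and the Case~3 bound is approached in the limit as the exit tends to $D$, the worst-case cost over all exit positions equals exactly the displayed expression, establishing~\eqref{eqa: line algo performance}. The explicit closed form of $h(y)$, needed only to evaluate the Case~2 supremum, is deferred to the following lemma and is not required for this statement.
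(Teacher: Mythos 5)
Your proposal is correct and follows essentially the same route as the paper, whose ``proof'' of this lemma is just the three-case analysis (pre-detour, detour, post-detour) given immediately before the statement, with the deployment cost of $1$ factored out and Lemma~\ref{lem: meeting condition for line algo} used to delimit Case~2. The only substantive difference is that you supply an actual argument (via $p'(u)\le -1/2$, equivalently $\cos(u+p(u)/2)\le -1/3$ for $u\ge\pi/2$) for the Case-3 monotonicity that the paper dismisses with ``clearly the closer the exit is to $D$, the higher the cost''; that computation checks out and makes the limiting bound $\chi+2\sinn{\chi}/\coss{\phi}+2p(\chi)$ rigorous.
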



\begin{lemma}\label{lem: calculation of h, line interior}
	For every $\chi>0$ and for every $\chi -f(\chi) \leq y \leq \chi$, the distance $h(y)$ that $\RA$ travels from $A$ until finding $\RB$ when following the meeting protocol in Algorithm $\mathcal B(\chi, \phi)$ is 
$$
h(y)=
\frac{
2+(\chi-y)^2 - 2\cos(\chi+y) + 2 (\chi-y) \left(\sin (\phi+y)- \sin (\phi-\chi) \right)
}{2 (\chi-y-\sin (\phi-\chi)+\sin (\phi+y))}.
$$
In particular, $h(y)$ is strictly decreasing for $0\leq \phi \leq f(\chi)/2$. 
\end{lemma}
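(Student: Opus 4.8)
The plan is to introduce coordinates, express the meeting point $G$ on $BC$ through the single scalar $q:=\barr{BG}$, write $h(y)=\barr{EG}$ by the law of cosines in the triangle $EBG$, and then eliminate $q$ using the meeting equation $q=y+h(y)-\chi$ supplied by \eqref{equa: meeting condition line interior trajectory}. Put $O$ at the origin with $A=(0,1)$ and $A'=(0,-1)$, so that $AA'$ is the $y$‑axis; then $B=(\sinn{\chi},\coss{\chi})$, $D=(-\sinn{\chi},\coss{\chi})$ and $E=(-\sinn{y},\coss{y})$. Since $BD$ is horizontal, the detour ray leaves $B$ toward the half‑plane bounded by $BD$ that contains $O$ in the unit direction $\hat u=(-\coss{\phi},\sinn{\phi})$; hence $C=(0,\coss{\chi}+\sinn{\chi}\sinn{\phi}/\coss{\phi})$ and $\barr{BC}=\sinn{\chi}/\coss{\phi}$ (which is exactly the value of $q(\chi)$ used in Case~3), and $G=B+q\,\hat u$.

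Next, $h(y)^2=|E-G|^2=|E-B|^2-2q\,(E-B)\cdot\hat u+q^2$, and the two elementary identities $|E-B|^2=(\sinn{y}+\sinn{\chi})^2+(\coss{y}-\coss{\chi})^2=2-2\coss{\chi+y}$ and $(E-B)\cdot\hat u=(\sinn{y}+\sinn{\chi})\coss{\phi}+(\coss{y}-\coss{\chi})\sinn{\phi}=\sinn{\phi+y}-\sinn{\phi-\chi}$ turn this into
\[
h(y)^2 \;=\; q^2 \;-\; 2q\bigl(\sinn{\phi+y}-\sinn{\phi-\chi}\bigr)\;+\;\bigl(2-2\coss{\chi+y}\bigr).
\]
Substituting $q=y+h(y)-\chi$ makes the $h(y)^2$ term cancel, leaving a \emph{linear} equation for $h(y)$; solving it yields precisely the claimed closed form. (The cancellation just records that $G$ moves along the branch, near $B$, of the hyperbola with foci $E,B$ on which $\barr{EG}-\barr{BG}$ equals the constant $\chi-y$; that branch degenerates exactly when $\chi-y=\barr{EB}$, i.e.\ when $y=\chi-f(\chi)$, and then the formula correctly returns $h(\chi-f(\chi))=f(\chi)=\barr{FB}$, consistent with Lemma~\ref{lem: meeting condition for line algo}.)

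For the monotonicity I would write the closed form as $h=\dfrac{s^{2}+c+2sw}{2(s+w)}$ with $s:=\chi-y$, $c:=2-2\coss{\chi+y}=\barr{EB}^{2}$ and $w:=\sinn{\phi+y}-\sinn{\phi-\chi}=\barr{EB}\coss{\phi-s/2}>0$, so that $\partial h/\partial w=\dfrac{s^{2}-c}{2(s+w)^{2}}$. The key auxiliary fact is $s\le\barr{EB}$, i.e.\ $\chi-y\le 2\sinn{\chi-(\chi-y)/2}$, with equality only at $y=\chi-f(\chi)$: this holds because $\rho(t):=2\sinn{\chi-t/2}-t$ is strictly decreasing and vanishes at $t=f(\chi)$ by \eqref{equa: def f}, whence $\rho(s)\ge\rho(f(\chi))=0$ for $s\le f(\chi)$. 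With $s^{2}<c$ in hand one differentiates the closed form with respect to $y$ (through $s$, $c$ and $w$), clears denominators, and reduces $h'(y)<0$ — via product‑to‑sum manipulations — to a single trigonometric inequality in $\chi,\phi,y$, to be checked on the admissible box $0\le\phi\le f(\chi)/2$, $\chi-f(\chi)\le y\le\chi$, using also $q=\barr{BG}\le\barr{BC}$ from Lemma~\ref{lem: meeting condition for line algo}. I expect this last step to be the real obstacle: the sign of $h'(y)$ is not negative term by term, so one must use the bound on $\phi$ and the estimate $s\le\barr{EB}$ jointly, and I would be prepared to fall back on a case split in $\phi$ (small versus close to $f(\chi)/2$) or on the monotonicity of an auxiliary one‑variable function if a direct reduction proves too unwieldy; the derivation of the closed form itself is routine once the coordinates are fixed.
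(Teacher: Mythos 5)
Your derivation of the closed form is correct and is in substance the paper's own proof. The paper expands $\barr{EG}^2$ via the chain $\vect{EG}=\vect{EH}+\vect{HB}+\vect{BG}$ (through the reflected point $H$ and an angle chase for each pairwise inner product), whereas you collapse $\vect{EH}+\vect{HB}$ into $\vect{EB}$ and compute the two needed quantities $|E-B|^2=2-2\coss{\chi+y}$ and $(E-B)\cdot\hat u=\sinn{\phi+y}-\sinn{\phi-\chi}$ directly in coordinates; both routes then hinge on the identical key step, namely substituting $q=y+h(y)-\chi$ from \eqref{equa: meeting condition line interior trajectory} so that the $h^2$ terms cancel and a linear equation for $h(y)$ remains. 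Your identities check out, your resulting expression $h=\bigl(s^2+c+2sw\bigr)/\bigl(2(s+w)\bigr)$ agrees with the stated formula, and your endpoint verifications ($h=f(\chi)$ at $y=\chi-f(\chi)$ using $s=\barr{EB}$ there, and $h=\sinn{\chi}/\coss{\phi}=\barr{BC}$ at $y=\chi$) are correct and are a nice sanity check the paper does not perform.

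The genuine gap is the last sentence of the lemma: you do not prove that $h(y)$ is strictly decreasing, and you say so yourself. Your preparation is sound --- the reduction of $s\le\barr{EB}$ (i.e.\ $s^2\le c$, with equality only at $y=\chi-f(\chi)$) to the defining equation \eqref{equa: def f} is correct, as is $\partial h/\partial w=(s^2-c)/(2(s+w)^2)$ --- but the final inequality $h'(y)<0$ is deferred to an unspecified trigonometric case analysis, so as written the proposal establishes only the formula, not the monotonicity. This matters downstream: the monotonicity is exactly what Case~2 of Algorithm $\mathcal C$ invokes to define $\psi$ as the \emph{unique} solution of $h(\psi)=\chi+\lambda-\psi$. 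For comparison, the paper does not differentiate the closed form either; it justifies the monotonicity by the geometric observation in the proof of Lemma~\ref{lem: meeting condition for line algo} that, since $C$ lies in the triangle $FDB$ when $\phi\le f(\chi)/2$, robot \RB's detour segment moves it toward every candidate exit in the relevant arc, so the pursuit that \RA\ initiates gets shorter as $y$ grows. If you want to avoid the case split you anticipate, adapting that pursuit argument (for $y_1<y_2$, the interceptor starts no farther from a target that is strictly closer to the pursued robot's detour path) is likely cheaper than differentiating your closed form; either way, the claim needs an actual argument, and the hypothesis $\phi\le f(\chi)/2$ must enter it.
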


\begin{proof}
\begin{figure}[thb]
                \centering
                \includegraphics[scale=0.5]{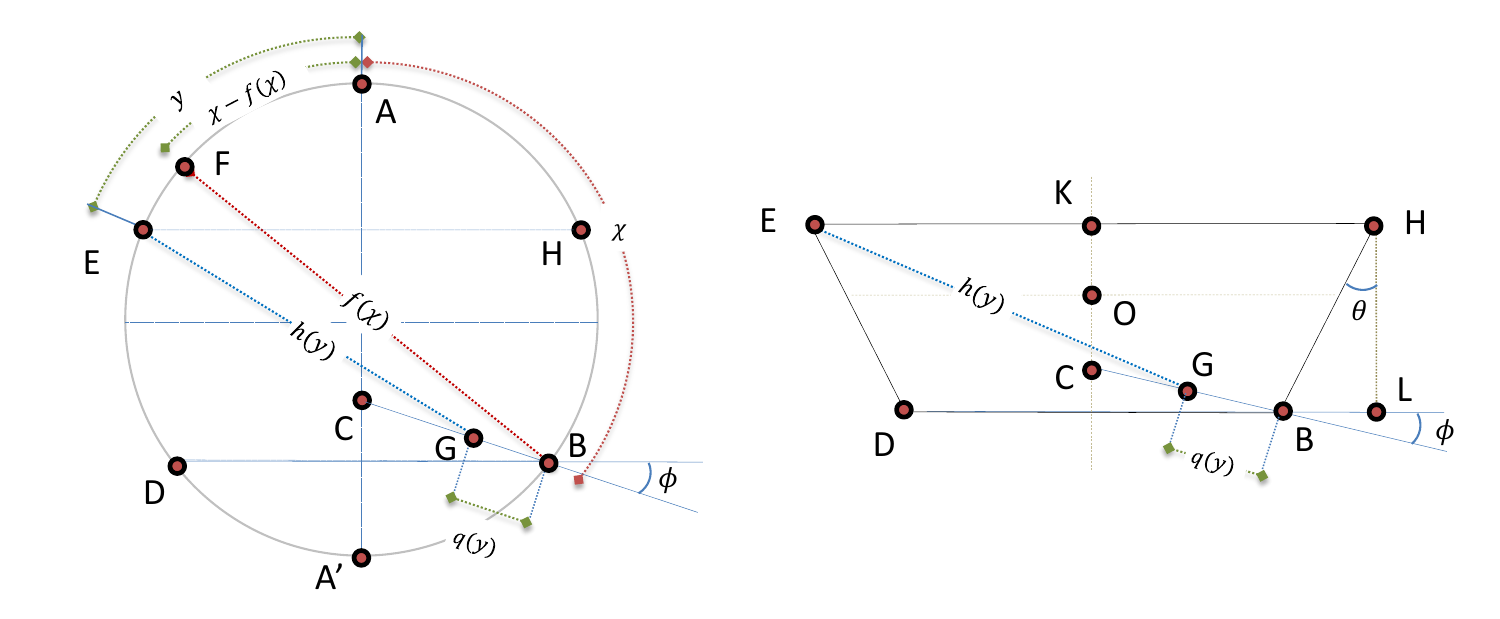}
                \caption{The analysis of Algorithm $\mathcal B(\chi, \phi)$. }
                \label{fig: line-evac-analysisAngled}
\end{figure}
We start by making some handy observations. 
For this we rely on Figure~\ref{fig: line-evac-analysisAngled} (that is a continuation of Figure~\ref{fig: line-evac New}). 
Let $H$ be the point that is symmetric to $E$ with respect to~$AA'$. 
Denote with $L$ the projection of $H$ onto the supporting line of $DB$. 
Set $\theta:=\angle{BHL}$, and observe the following equation for $\theta$.
\begin{align}
\theta &= \notag
\frac \pi 2 - \angle{EHB}
= \frac \pi 2 - \frac{\arccc{BE}}{2} 
= \frac \pi 2 - \frac{\arccc{BD}+\arccc{DE}}{2}\\
&= \label{equa: value of theta}
\frac \pi 2 - \frac{2(\pi-\chi)+\chi-y}{2}
=\frac{\chi+y}{2}-\frac{\pi}{2}
\end{align}
Our goal is to compute $h(y)=\barr{EG}$. For this we see that
$
\vect{EG}=\vect{EH}+\vect{HB}+\vect{BG},
$
and therefore
\begin{equation}\label{equa: inner products line algo}
\barr{EG}^2=\barr{EH}^2+\barr{HB}^2+\barr{BG}^2 
+ 2\left(
\vect{EH}\cdot\vect{HB}+
\vect{EH}\cdot\vect{BG}+
\vect{HB}\cdot\vect{BG}
\right).
\end{equation}
We have
$\barr{EH}=2\sinn{y}, \barr{HB}= 2\sinn{\frac{\chi-y}2}, \barr{BG}=q(y)$, 
and
\begin{align*}
\vect{EH}\cdot\vect{HB} &= 
\barr{EH} ~\barr{HB} \coss{\pi - \angle{EHB}}
\\
\vect{EH}\cdot\vect{BG} &=
\barr{EH} ~\barr{BG} \coss{\angle{GBL}}
\\
\vect{HB}\cdot\vect{BG} &=
\barr{HB} ~\barr{BG} \coss{\pi- \angle{GBH}}.
\end{align*}
We also have
\begin{align*}
\coss{\pi - \angle{EHB}}&=\coss{\pi/2+\theta}= \coss{\frac{\chi+y}2}
\\
\coss{\angle{GBL}} &=\coss{\pi-\phi}=\coss{\phi}
\\
\coss{\pi- \angle{GBH}} &= \coss{\pi +\frac\pi2+\theta - \phi}=
-\coss{\frac{\chi+y}2 - \phi}.
\end{align*}
Substituting the above in~\eqref{equa: inner products line algo}, we obtain an equation between $h(y)$, $q(y)$, $y$, $\chi$, and $\phi$. 
In the latter equation we can substitute $q(y)$ using the meeting condition~\eqref{equa: meeting condition line interior trajectory}, obtaining this way the required closed formula for $h(y)$.
The fact that $h(y)$ is strictly decreasing for $0\leq \phi \leq f(\chi)/2$ follows from the proof of Lemma~\ref{lem: meeting condition for line algo}.
\end{proof}

The first natural attempt in order to beat Algorithm $\mathcal A$ would be to consider $\mathcal B(\chi, 0)$, i.e. make $BC$ perpendicular to $AA'$ in Figure~\ref{fig: line-evac New}. In light of Lemma~\ref{lem: calculation of h, line interior} and using Lemma~\ref{lem: line algo performance}, 
we state the following claim to build some intuition for our next, improved, algorithm. 

\begin{claim}\label{claim: phi=0, line algo performance}
The performance of algorithm $B(\chi, 0)$ is optimized when $\chi=\chi_0\approx2.62359$, and its cost is $1+4.644=5.644$.
The location of the exit inducing the worst case for $B(\chi, 0)$ 
is when $y= \arccc{EA} \approx 0.837 \pi$. 
The meeting point of the two robots takes place at point $G$ (see Figure~\ref{fig: line-evac New}, and set $\phi=0$), where $q(y)= \barr{BG} \approx 0.117 \approx 0.236 \barr {BC}$. 
In particular, the cost of the algorithm, if the meeting point of the robots is during \RB's \textit{pre-detour}, \textit{detour} and \textit{post-detour} phase, is (approximately) 5.621, 5.644 and 5.644 respectively. 
\end{claim}

\ignore{
 (*Mathematica code*)
xx = 2.62359;
f[x_] := z /. FindRoot[ z == 2*Sin[x - z/2], {z, 1}];
Hp[x_, y_, 
   phi_] := (x^2 - 2 x y + y^2 + 4 Sin[(x - y)/2]^2 + 
     4 (x - y) Cos[phi] Sin[y] + 4 Sin[x] Sin[y] + 
     4 (x - y) Sin[(x - y)/2] Sin[
       phi + ArcSin[Cos[(x + y)/2]]])/(2 (x - y + 2 Cos[phi] Sin[y] + 
       2 Sin[(x - y)/2] Sin[phi + ArcSin[Cos[(x + y)/2]]]));

NMaximize[ {y + 2*Hp[xx, y, 0], y >= xx - f[xx], y <= xx}, y]

yy = 0.8366130642397082;
yy + Hp[xx, yy, 0] - xx  (*this is q*)
}

Note that $\chi_0$ of Claim~\ref{claim: phi=0, line algo performance} is strictly smaller than $x_0$ of Lemma~\ref{lem: monotonicity of x+f(x)}. In other words, the previous claim is in coordination with our intuition that if the robots moved towards the interior of the disk a little before the critical position of the meeting point $x_0$ of Algorithm $\mathcal A$, then the cost of the algorithm could be improved. 

\subsection{New evacuation algorithm \texorpdfstring{$\mathcal C(\chi, \phi, \lambda)$}{Lg}}


Claim~\ref{claim: phi=0, line algo performance} is instructive for the following reason. 
Note that the worst meeting point~$G$ for Algorithm $\mathcal B(\chi_0,0)$ satisfies $\barr{BG} \approx 0.236 \barr {BC}$. 
This suggests that if we consider algorithm $\mathcal B(\chi_0,\phi)$ instead, where $\phi>0$, then we would be able to improve the cost if the meeting point happened during the detour phase of $\RB$. On one hand, this further suggests that we can decrease the detour position $\chi_0$ (note that the increasing in $\chi$ cost $\chi + f(\chi)$ is always a lower bound to the performance of our algorithms when $\chi<x_0$). On the other hand, that would have a greater impact on the cost when the meeting point is in the post-detour phase of $\RB$, as in this case the cost of moving from $B$ to~$C$ and back to $B$ would be $2\sinn{\chi}/\coss{\phi}$ instead of just  $2\sinn{\chi}$. 
A compromise to this would be to follow the linear detour trajectory of $\RB$ in $\mathcal B(\chi_0,\phi)$ only up to a certain threshold-distance $\lambda$, after which the robot should reach the diameter segment
$AA'$ along a linear segment perpendicular to  segment $AA'$
then return to the detour point $B$ along a linear segment.
Thus the detour forms a triangle.
This in fact completes the high level description of Algorithm
$\mathcal C(\chi, \phi, \lambda)$ that we formally describe below. 

For an instance of Algorithm $\mathcal C(\chi, \phi, \lambda)$,
we fix $\chi$, $\phi$, and $\lambda$, with $\chi \in [\pi/2,x_0]$, $\phi \in [0, f(\chi)/2 ]$, and $\lambda \in [0, \sinn{\chi}/\coss{\phi}]$.
%
As before, we assume without loss of generality that $\RA$ finds the exit. 
%
The trajectory of robot $\RB$ (that has neither found the exit nor met \RA\ after \RA\ has found the exit) can be partitioned into roughly the same four phases as for Algorithm $\mathcal B(\chi, \phi)$; so we again call them \textit{deployment}, \textit{pre-detour}, \textit{detour} and \textit{post-detour} phases. The description of the phases refers to the left-hand side of Figure~\ref{fig: piecewise-line-evac New}, which is a partial modification of Figure~\ref{fig: line-evac New}.\\

\noindent
{\bf Algorithm $\mathcal C (\chi, \phi, \lambda)$} (with a triangular detour).  
The phases of robot $\RB$'s trajectory (unless it is met by $\RA$ to go to the exit) are: \\ 
$\star$ \textit{Deployment phase:} 
Same as in Algorithm $\mathcal B(\chi, \phi)$. At time 1, $\RB$ is at point~$A$. 
\\
$\star$ \textit{Pre-detour phase:} 
Same as in Algorithm $\mathcal B(\chi, \phi)$. In additional time $\chi$, $\RB$ is in point~$B$.
\\
$\star$ \textit{Detour phase:} This phase is further split into three subphases. \\
\hspace*{.5cm}$\diamond$ \textit{Subphase-1:} Up to additional time $\lambda$, $\RB$ moves along a line segment exactly as in the detour phase of Algorithm $\mathcal B(\chi, \phi)$. Let $G$ be the position of the robot at the end of this phase. 
\\
\hspace*{.5cm}$\diamond$ \textit{Subphase-2:} Let $C$ be the projection of $G$ onto $AA'$. $\RB$ follows line segment $GC$ until it reaches point $C$. 
\\
\hspace*{.5cm}$\diamond$ \textit{Subphase-3 (Recovering phase):} Robot follows
line segment $CB$ back to point $B$.
\\
$\star$ \textit{Post-detour phase:}
Same as in Algorithm $\mathcal B(\chi, \phi)$. After additional time $\arccc{EA'}$, $\RB$ reaches point $A'$. \\
 At the same time \RA~ follows a trajectory that is the reflection of \RB's
 trajectory along the line $AA'$ until it finds the exit. 
 The meeting protocol that $\RA$ follows once it finds the exit is the same as for Algorithms $\mathcal A$ and $\mathcal B(\chi, \phi)$. 

\begin{figure}[!ht]
                \centering
                \includegraphics[scale=0.5]{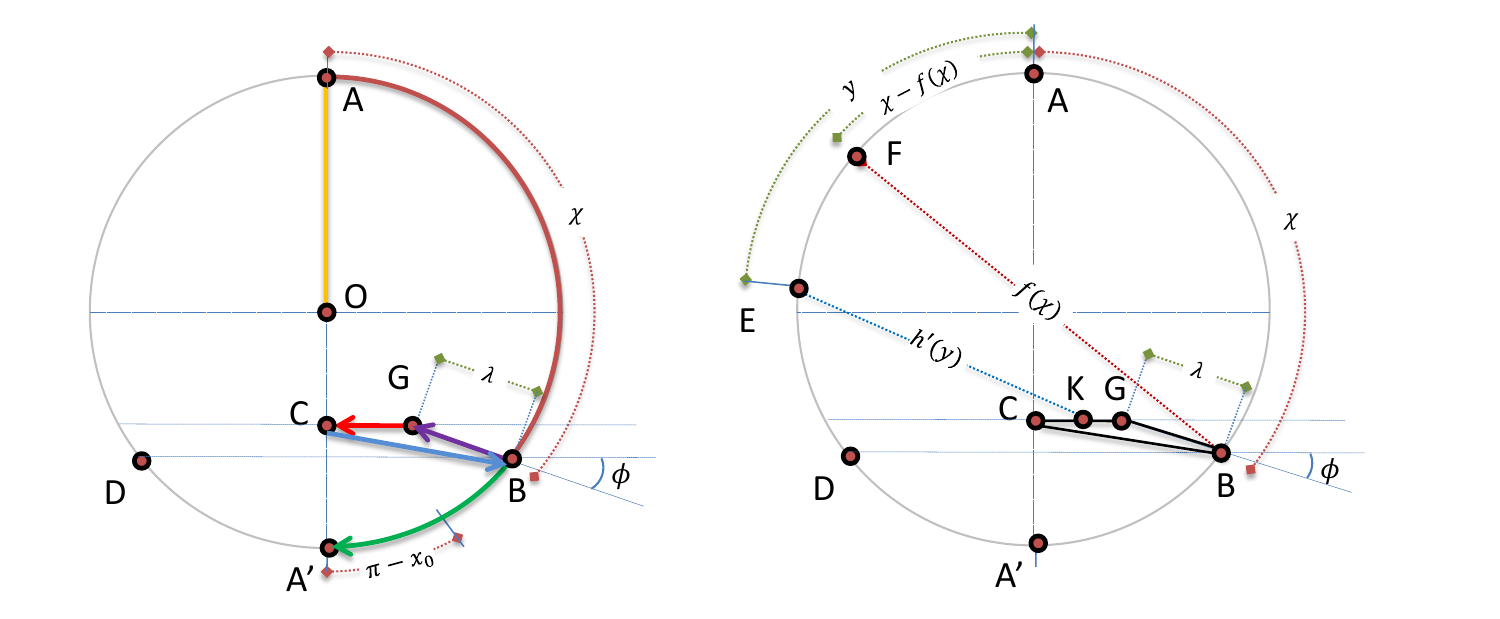}
                \caption{Illustrations for Algorithm $\mathcal C (\chi, \phi, \lambda)$.}
                \label{fig: piecewise-line-evac New}
\end{figure}

Obviously, Algorithm $\mathcal C\left(\chi, \phi, \frac{\sinn{\chi}}{\coss{\phi}}\right)$ is identical to Algorithm $\mathcal B(\chi, \phi)$.
Notice that an immediate consequence of the definition is that if robot $\RA$ finds the exit and meets $\RB$ during its detour subphase-2 in some point $K$ (as in the right-hand side of Figure~\ref{fig: piecewise-line-evac New}), then 
\begin{equation}\label{equa: meeting condition piecewise line}
\arccc{EA}+\barr{EK}=\arccc{AB}+\barr{BG}+\barr{GK}.
\end{equation}

When $\RA$ finds the exit somewhere on the arc $A'A$,
it catches $\RB$ on its trajectory so that they return together to the exit. Note that since robots meet at point $C$, if the exit is not in the arc $DB$, it is impossible for a robot to be caught by the other robot in subphase-3 of its detour phase. Hence, there are four cases as to where $\RB$ can be caught by $\RA$ that found the exit. 
As before, we omit the extra cost 1 which is the time needed for the deployment phase during the case distinction.\\

\begin{description}
\item{Case 1: $\RB$ is caught in its pre-detour phase:} The cost of Algorithm $\mathcal C(\chi, \phi, \lambda)$ is at most $\chi+f(\chi)$, exactly as in the analogous case of Algorithm $\mathcal B(\chi, \phi)$. 

\item{Case 2:  $\RB$ is caught in its detour subphase-1:} As in case 2 of the analysis of Algorithm $\mathcal B(\chi, \phi)$, if $E$ is the position of the exit, then $y=\arccc{EA}$ satisfies $y\geq \chi-f(\chi)$ (the relevant figure in this case remains the left-hand side of Figure~\ref{fig: line-evac-analysisAngled}). As long as $q(y)$ remains less than $\lambda$, the cost of the algorithm remains $y+2h(y)$. In order to find the maximum $y$ for which this formula remains valid, we need to solve the equation $\lambda=q(y)$. This is possible by recalling that  $h(y)=\chi+q(y)-y$, and by invoking the formula of $h(y)$ as it appears in Lemma~\ref{lem: calculation of h, line interior}. By the monotonicity of $h(y)$, we have that there exists a unique $\psi$ satisfying $h(\psi) = \chi+\lambda - \psi$. It follows that the cost of the algorithm in this case is at most
$\max_{\chi-f(\chi) \leq y \leq \psi} \{ y + 2h(y)\}$.

\item{Case 3: $\RB$ is caught in its detour subphase-2:} In this case, the relevant figure is the right-hand side of Figure~\ref{fig: piecewise-line-evac New}. Let the exit be at point $E$, and let $K$ denote the meeting point of the robots on the line segment $GC$. We set $h'(y):=\barr{EK}$, 
whose value is dertermined by 	
Lemma~\ref{lem: quantities from algo C} (a).  We conclude that in this case the cost of the algorithm is at most $\max_{\psi \leq y \leq \chi} \{ y + 2h'(y)\}$.

\item{Case 4:  $\RB$ is caught in its post-detour phase:} Let $\postC$ again be the total time a robot needs until it enters its post-detour phase. As in case 3 of Algorithm $\mathcal B(\chi, \phi)$, the cost of Algorithm $C(\chi \phi, \lambda)$ for this case is at most $\postC+2p(\chi)$. It thus remains to show how to calculate $\postC = \arccc{AB}+\barr{BG}+\barr{GC}+\barr{CB}$, which is done in Lemma~\ref{lem: quantities from algo C}~(b). 
\end{description}

\begin{lemma}\label{lem: quantities from algo C}
The following statements hold for Algorithm $\mathcal C(\chi, \phi, \lambda)$: \\
(a) Suppose that $\RA$ finds the exit and meets $\RB$ in its detour subphase-2. Then the time $h'(y)$ that 
$\RA$ needs from finding the exit until meeting $\RB$ is $h'(y) = N(\chi, y, \lambda, \phi) / D(\chi, y, \lambda, \phi)$, where
\begin{align*}
N(\chi, y, \lambda, \phi):=
&
~~~~2+\lambda^2+(\lambda+\chi-y)^2+2 \lambda (\sinn{\phi-\chi}-\sinn{\phi+y}) \\
&+2 (\lambda+\chi-y) (\sinn{\chi}+\sinn{y} 
	- \lambda \coss{\phi})-2 \coss{\chi+y}, \ \ \mbox{ and} \\
D(\chi, y, \lambda, \phi):=
&2 (\lambda+\chi - y +\sinn{\chi}+\sinn{y} -\lambda \coss{\phi}).
\end{align*}
(b) Suppose that $\RA$ finds the exit and meets $\RB$ in its post-detour phase. 
Then the total time that $\RB$ spends in its detour phase is
$$
\lambda+\sinn{\chi}-\lambda\coss{\phi}+\sqrt{\sin^2(\chi)+\lambda^2\sin^2(\phi)}.
$$
\end{lemma}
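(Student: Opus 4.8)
\textbf{Proof plan for Lemma~\ref{lem: quantities from algo C}.}

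\medskip

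The plan is to prove parts (a) and (b) by the same vector-decomposition technique already used in the proof of Lemma~\ref{lem: calculation of h, line interior}, updated to account for the two-segment detour $B \to G \to C$ instead of the single segment $B \to C$. First I would fix the exit position $E$ on the arc $A'A$ and, as before, let $H$ be the reflection of $E$ across $AA'$, so that $\barr{EH} = 2\sinn{y}$, $\barr{HB} = 2\sinn{\tfrac{\chi-y}{2}}$, and the angle $\theta := \hat{BHL} = \tfrac{\chi+y}{2} - \tfrac{\pi}{2}$ is unchanged, where $L$ is the projection of $H$ onto the supporting line of $DB$. For part (a), the meeting point $K$ lies on segment $GC$, and I would write
$$
\vect{EK} = \vect{EH} + \vect{HB} + \vect{BG} + \vect{GK},
$$
noting that $\barr{BG} = \lambda$ (the robot travelled time $\lambda$ along the subphase-1 segment) and that $\barr{GK} = q'(y)$ is the distance travelled in subphase-2, to be eliminated at the end. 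Expanding $\barr{EK}^2$ as a sum of the four squared norms plus twice the six pairwise dot products, each dot product reduces to a product of lengths times a cosine of an angle built from $\theta$, $\phi$, and the right angle at $C$ (since $GC \perp AA'$, the direction of $\vect{GK}$ makes a known angle with each of the other three vectors). Collecting terms gives a relation between $h'(y) = \barr{EK}$, $q'(y)$, $y$, $\chi$, $\lambda$, $\phi$; then I would substitute the meeting condition~\eqref{equa: meeting condition piecewise line}, which in this notation reads $y + h'(y) = \chi + \lambda + q'(y)$, hence $q'(y) = y + h'(y) - \chi - \lambda$, to solve for $h'(y)$ and obtain the stated quotient $N/D$. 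I expect that after simplification the linear-in-$h'(y)$ structure (the $\barr{EK}^2$ terms involving $q'(y)$ linearly, once the quadratic $q'(y)^2$ cancels against the expansion) yields exactly the claimed $N$ and $D$.

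\medskip

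For part (b) the computation is purely geometric and does not involve the exit at all: I need the total detour length $\barr{BG} + \barr{GC} + \barr{CB}$. Here $\barr{BG} = \lambda$ by definition. To get $\barr{GC}$ and $\barr{CB}$, I would set up coordinates with $AA'$ as an axis; $B$ lies at perpendicular distance $\sinn{\chi}$ from $AA'$ (since $\barr{DB}/2 = \sinn{\chi}$), and $G$ is obtained from $B$ by moving distance $\lambda$ along a ray that makes angle $\phi$ with the perpendicular direction $\vect{BD}$, toward the side of $AA'$. Thus $G$ has perpendicular distance $\sinn{\chi} - \lambda\coss{\phi}$ from $AA'$ and is displaced by $\lambda\sinn{\phi}$ in the along-$AA'$ direction relative to $B$. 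Then $\barr{GC} = \sinn{\chi} - \lambda\coss{\phi}$ (the remaining perpendicular drop from $G$ to $AA'$), and $\barr{CB}$, being the hypotenuse of the right triangle with legs $\sinn{\chi}$ (perpendicular distance of $B$) and $\lambda\sinn{\phi}$ (horizontal offset between $B$ and $C$, which equals the offset between $B$ and $G$ since $C$ is the foot of $G$), equals $\sqrt{\sin^2(\chi) + \lambda^2\sin^2(\phi)}$. Summing, $\lambda + (\sinn{\chi} - \lambda\coss{\phi}) + \sqrt{\sin^2(\chi) + \lambda^2\sin^2(\phi)}$, which is exactly the claimed expression.

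\medskip

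The main obstacle is the bookkeeping in part (a): correctly identifying the six angles between the four vectors $\vect{EH}, \vect{HB}, \vect{BG}, \vect{GK}$ and keeping track of signs (supplementary versus complementary angles), then carrying out the algebraic simplification so that the $q'(y)^2$ term drops out and the result collapses to the compact $N/D$ form. The angle involving $\vect{HB}$ and $\vect{BG}$ reuses the computation from Lemma~\ref{lem: calculation of h, line interior} (it is $\pi/2 + \theta - \phi$ up to a straight angle), and the angles involving $\vect{GK}$ exploit that $\vect{GK}$ is antiparallel to $\vect{GC}$, which is perpendicular to $AA'$; once those are pinned down the rest is routine trigonometric identity manipulation of the kind already suppressed in the earlier proof. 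I would present part (b) first since it is elementary, then part (a), and in both cases relegate the final expansion to a remark that it is a direct computation, matching the paper's established style of writing "obtaining this way the required closed formula."
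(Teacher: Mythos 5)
Your plan matches the paper's proof essentially step for step: part (a) is the same decomposition $\vect{EK}=\vect{EH}+\vect{HB}+\vect{BG}+\vect{GK}$, with the first three dot products inherited from Lemma~\ref{lem: calculation of h, line interior} and $q'(y)$ eliminated via the meeting condition $q'(y)=y+h'(y)-\chi-\lambda$; part (b) computes the same three lengths $\barr{BG}=\lambda$, $\barr{GC}=\sinn{\chi}-\lambda\coss{\phi}$ and $\barr{CB}=\sqrt{\sin^2(\chi)+\lambda^2\sin^2(\phi)}$ (you use coordinates where the paper uses $\vect{BC}=\vect{BG}+\vect{GC}$ and expands the square, an immaterial difference). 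One slip to fix before executing part (a): since $K$ lies on the segment from $G$ to $C$, the vector $\vect{GK}$ is \emph{parallel} to $\vect{GC}$, not antiparallel; it is antiparallel to $\vect{EH}$, which is exactly what yields $\vect{EH}\cdot\vect{GK}=-\barr{EH}\,\barr{GK}$, $\vect{HB}\cdot\vect{GK}=\barr{HB}\,\barr{GK}\sinn{\theta}$ and $\vect{BG}\cdot\vect{GK}=\barr{BG}\,\barr{GK}\coss{\phi}$ as in the paper's computation, and reversing that orientation would flip the signs of all three of these terms and produce an incorrect $N$ and $D$.
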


\begin{proof}
As an illustration of the proof we refer to Figure~\ref{fig: piecewise-line-evac-analysisAngled}, which is a continuation of Figure~\ref{fig: piecewise-line-evac New}. Let $H$ be the symmetric point of $E$ with respect to diameter $AA'$. As in Figure~\ref{fig: line-evac-analysisAngled}, $L$ is the projection of $H$ onto the supporting line of $DB$, and $\theta$ denotes the angle $\angle{BHL}$, whose value is given by \eqref{equa: value of theta}. Further, $G'$ and $C'$ are the projections of $G$ and $C$, respectively, onto $DL$. The calculations below follow the spirit of the arguments in Lemma~\ref{lem: calculation of h, line interior}. 
\begin{figure}[!ht]
                \centering
                \includegraphics[scale=0.5]{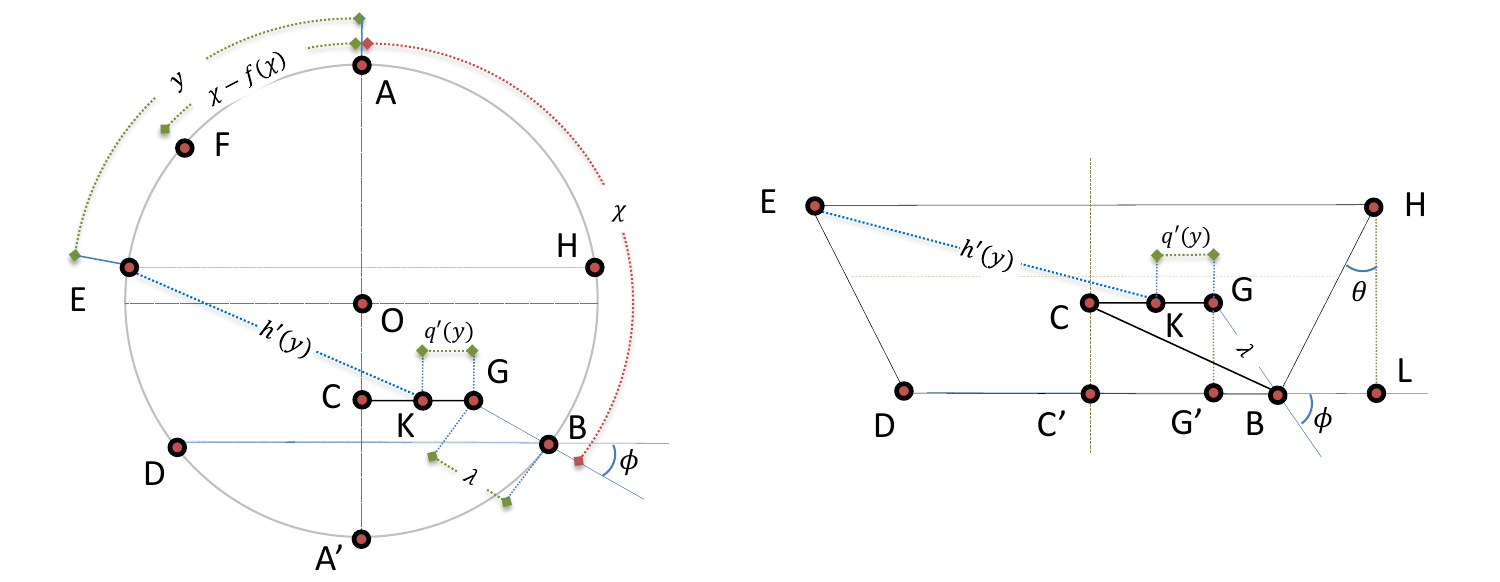}
                \caption{The analysis of Algorithm $\mathcal C (\chi, \phi, \lambda)$.}
                \label{fig: piecewise-line-evac-analysisAngled}
\end{figure}

(a) As before, $y$ denotes the distance of the exit from point $A$. 
We have that 
$
\vect{EK}=\vect{EH}+\vect{HB}+\vect{BG}+\vect{GC},
$
and therefore
\begin{align}
\barr{EK}^2
= & ~~ \barr{EH}^2+\barr{HB}^2+\barr{BG}^2  + \barr{GC}^2 +  \notag \\
&
 2\left(
\vect{EH}\cdot\vect{HB}+
\vect{EH}\cdot\vect{BG}+
\vect{EH}\cdot\vect{GC}+
\vect{HB}\cdot\vect{BG}+
\vect{HB}\cdot\vect{GC}+
\vect{BG}\cdot\vect{GC}
\right), 
\label{equa: inner products piecewise line algo}
\end{align}
where
$\barr{EK}=h'(y)$, $\barr{EH}=2\sinn{y}$, $\barr{HB}= 2\sinn{\frac{\chi-y}2}$, $\barr{BG}=\lambda$, and $\barr{GK}=q'(y)$. 
The inner products $\vect{EH}\cdot\vect{HB}, \vect{EH}\cdot\vect{BG}, \vect{HB}\cdot\vect{BG}$ are obtained exactly as in Lemma~\ref{lem: calculation of h, line interior}. For the remaining inner products we see that 
\begin{align*}
\vect{EH}\cdot\vect{GK} &= 
\barr{EH} ~\barr{GK} \coss{\pi}
=- \barr{EH} ~\barr{GK},
\\
\vect{HB}\cdot\vect{GK} &=
\barr{HB} ~\barr{GK} \coss{\frac\pi2-\theta}
=\barr{HB} ~\barr{GK} \sinn{\theta}, \ \ \mbox{ and}
\\
\vect{BG}\cdot\vect{GK} &=
\barr{BG} ~\barr{GK} \coss{\phi}.
\end{align*}
Substituting the above in~\eqref{equa: inner products piecewise line algo}, we obtain an equation for $h'(y)$ as a function of
$q'(y)$, $y$, $\chi$, and $\phi$. In that equation we can substitute $q'(y)$ using the meeting condition~\eqref{equa: meeting condition piecewise line}, according to which $q'(y)=y+h'(y)-\chi-\lambda$. 
Resolving the resulting equation for $h'(y)$ then gives the desired formula. \\

(b) We need to determine $\barr{BG}+\barr{GC}+\barr{CB}$, where $\barr{BG}=\lambda$. First, we observe that 
$
\barr{GC} = \barr{G'C'}=\barr{BC'}-\barr{BG'} = \sinn{\chi} - \lambda \coss{\phi}.
$
In order to calculate $\barr{CB}$, we see that $\vect{BC}=\vect{BG}+\vect{GC}$. Hence we obtain
\begin{align*}
\barr{BC}^2
&=\barr{BG}^2+\barr{GC}^2 + 2\vect{BG}\cdot\vect{GC} \\
&= \lambda^2 + \left(\sinn{\chi} - \lambda \coss{\phi}\right)^2 + 2\lambda \left(\sinn{\chi} - \lambda \coss{\phi}\right) \coss{\phi} \\
& = \sin^2(\chi)+\lambda^2\sin^2(\phi),
\end{align*}
which concludes our claim.  
\end{proof}

Before stating our main theorem, we summarize the total time required by Algorithm $\mathcal C(\chi, \phi, \lambda)$ in the following lemma.

\begin{lemma}\label{lem: cost of piecewise line algo with parameters}
The cost of Algorithm $\mathcal C(\chi, \phi, \lambda)$ can be expressed as 
$$
1+\max
\left\{
\begin{array}{lr}
\chi + f(\chi) & \textit{(pre-detour phase)} \\
\max_{\chi-f(\chi) \leq y \leq \psi} \{ y + 2h(y)\}& \textit{(detour subphase-1)} \\
\max_{\psi \leq y \leq \chi} \{ y + 2h'(y)\} & \textit{(detour subphase-2)} \\
\chi + \lambda+\sinn{\chi}-\lambda\coss{\phi}+
&\sqrt{\sin^2(\chi)+\lambda^2\sin^2(\phi)} + 2p(\chi)\\
  &\textit{(post-detour phase)}
\end{array}
\right\},
$$
where the functions $f$ and $p$ are as in~\eqref{equa: def f} and \eqref{equa: def g}, respectively; functions $h(y)$ and $h'(y)$ are expressed explicitly in Lemmas~\ref{lem: calculation of h, line interior} and \ref{lem: quantities from algo C}~(a), respectively; and $\psi$ is the unique solution to the equation $h(\psi) = \chi+\lambda - \psi$. 
\end{lemma}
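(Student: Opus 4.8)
The plan is to collect the case analysis carried out above (Cases~1--4 preceding Lemma~\ref{lem: quantities from algo C}) into a single optimization over the position of the exit. Since the deployment phase always takes time exactly $1$, independently of where the exit $X$ lies, it suffices to bound the time that elapses between the moment $\RB$ reaches $A$ and the moment both robots are at $X$, and to add $1$ at the very end. As in the analyses of Algorithms $\mathcal A$ and $\mathcal B(\chi,\phi)$, since $\RA$'s trajectory before it finds the exit is the mirror image of $\RB$'s about $AA'$, we may assume without loss of generality that $\RA$ finds the exit and then catches $\RB$ via the Meeting Protocol; the meeting point then lies on the part of $\RB$'s trajectory not yet traversed, which by construction is the concatenation of the pre-detour arc $\arccc{AB}$, the subphase-1 segment $\barr{BG}$, the subphase-2 segment $\barr{GC}$, the subphase-3 segment $\barr{CB}$, and the post-detour arc $\arccc{BA'}$. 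Since $\RA$'s and $\RB$'s trajectories cross at $C$ (where the robots would meet even in the absence of the exit), an exit not on the arc $DB$ cannot cause a meeting during subphase-3; hence, as already noted, exactly four cases arise, and I would accordingly split the exit positions on $\RA$'s exploration arc by their distance $y:=\arccc{EA}$ from $A$ into $[0,\chi-f(\chi)]$, $[\chi-f(\chi),\psi]$, $[\psi,\chi]$ and $[\chi,\pi]$.

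I would then read off the cost in each case from the material already developed. \emph{Case 1 (pre-detour, $y\le\chi-f(\chi)$):} exactly as for $\mathcal B(\chi,\phi)$; since $\chi\le x_0$, Lemma~\ref{lem: monotonicity of x+f(x)} places the worst exit at distance $\chi-f(\chi)$ from $A$, with cost $\chi+f(\chi)$. \emph{Case 2 (subphase-1):} the meeting condition \eqref{equa: meeting condition line interior trajectory}, $y+h(y)=\chi+q(y)$ with $\barr{EG}=h(y)$ from Lemma~\ref{lem: calculation of h, line interior}, gives cost $y+2h(y)$; this remains valid precisely while $q(y)=y+h(y)-\chi\le\lambda$, and since $q$ is continuous and, by the explicit form of $h$ in Lemma~\ref{lem: calculation of h, line interior}, increases from $0$ at $y=\chi-f(\chi)$ to $\barr{BC}=\sinn{\chi}/\coss{\phi}$ at $y=\chi$, the breakpoint $\psi$ characterized by $q(\psi)=\lambda$ (equivalently $h(\psi)=\chi+\lambda-\psi$) exists, is unique, and lies in $[\chi-f(\chi),\chi]$ for every admissible $\lambda\in[0,\sinn{\chi}/\coss{\phi}]$. \emph{Case 3 (subphase-2):} the refined meeting condition \eqref{equa: meeting condition piecewise line}, $y+h'(y)=\chi+\lambda+\barr{GK}$ with $\barr{EK}=h'(y)$ given by Lemma~\ref{lem: quantities from algo C}(a), gives cost $y+2h'(y)$ for $\psi\le y\le\chi$. \emph{Case 4 (post-detour, $y\ge\chi$):} exactly as in Case~3 of $\mathcal B(\chi,\phi)$, the cost is largest in the limit as the exit approaches $D$; at the time $\postC=\arccc{AB}+\barr{BG}+\barr{GC}+\barr{CB}$ at which $\RB$ enters its post-detour phase the two robots sit at the mirror points $D$ and $B$, then move towards each other along the arc $\arccc{BD}$ and reunite after additional time $p(\chi)$ by \eqref{equa: def g}, for total cost $\postC+2p(\chi)$; Lemma~\ref{lem: quantities from algo C}(b) evaluates the detour part of $\postC$ as $\lambda+\sinn{\chi}-\lambda\coss{\phi}+\sqrt{\sin^2(\chi)+\lambda^2\sin^2(\phi)}$, so that $\postC=\chi+\lambda+\sinn{\chi}-\lambda\coss{\phi}+\sqrt{\sin^2(\chi)+\lambda^2\sin^2(\phi)}$, which is the fourth entry of the maximum.

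To conclude, I would argue that within each case the displayed quantity really is the supremum over the exit positions belonging to that case: a continuity/mean-value argument in the spirit of the proof of Lemma~\ref{lem: meeting condition for line algo} shows that as $y$ sweeps $[\chi-f(\chi),\psi]$ and $[\psi,\chi]$ the meeting point sweeps continuously over $\barr{BG}$ and $\barr{GC}$ respectively, so no exit position in Cases~2--3 is overlooked, while the monotonicity statements already invoked identify the extremal exit in Cases~1 and~4. Taking the maximum of the four bounds and adding back the deployment cost $1$ yields the claimed expression. I expect the only genuine work to be the bookkeeping that makes the four exit-subintervals tile $[0,\pi]$ consistently --- in particular checking that $\psi\in[\chi-f(\chi),\chi]$, with $\psi=\chi-f(\chi)$ when $\lambda=0$ and $\psi=\chi$ when $\lambda=\sinn{\chi}/\coss{\phi}$ (consistent with $\mathcal C(\chi,\phi,\sinn{\chi}/\coss{\phi})$ reducing to $\mathcal B(\chi,\phi)$) --- rather than any new geometric estimate, since all the geometry has been done in Lemmas~\ref{lem: calculation of h, line interior} and~\ref{lem: quantities from algo C}.
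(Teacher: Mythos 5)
Your proposal is correct and takes essentially the same route as the paper: the paper states this lemma without a separate proof, as a direct summary of the four-case analysis (Cases 1--4) immediately preceding it, and your write-up assembles exactly those cases, adds back the deployment cost of $1$, and gives the same existence/uniqueness argument for $\psi$ via the monotone behaviour of $q(y)=y+h(y)-\chi$. The only extra content is your sanity check that $\psi$ ranges from $\chi-f(\chi)$ at $\lambda=0$ to $\chi$ at $\lambda=\sin(\chi)/\cos(\phi)$, which is consistent with, though not needed for, the paper's argument.
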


In Lemma~\ref{lem: cost of piecewise line algo with parameters}, the phases in parentheses in the maximization expression are the ones in which $\RA$ meets $\RB$ after having found the exit.
Using the statement of Lemma~\ref{lem: cost of piecewise line algo with parameters} and numerical optimization, we obtain the following improved upper bound.

\begin{theorem}\label{thm: num performance piecewise line algo}
For $\chi_0=2.631865, \phi_0=0.44916$ and $\lambda_0=0.05762$, the evacuation algorithm $\mathcal C(\chi_0, \phi_0, \lambda_0)$ has cost no more than $5.628$. 
\end{theorem}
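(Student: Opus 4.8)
The plan is to derive the bound directly from the cost formula in Lemma~\ref{lem: cost of piecewise line algo with parameters}: once the three parameters are fixed at $\chi_0=2.631865$, $\phi_0=0.44916$, $\lambda_0=0.05762$, the evacuation time equals $1$ plus the maximum of the four explicit expressions listed there, so the theorem reduces to certifying that each of those four quantities is at most $4.628$. Before doing so I would check that the chosen triple is admissible, i.e.\ $\chi_0\in[\pi/2,x_0]$, $\phi_0\in[0,f(\chi_0)/2]$ and $\lambda_0\in[0,\sin(\chi_0)/\cos(\phi_0)]$; since $x_0\approx 2.85344$ by Lemma~\ref{lem: monotonicity of x+f(x)}, the first inclusion is immediate, and the other two follow once $f(\chi_0)$ has been computed.

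The first step is to pin down the auxiliary constants. The quantities $f(\chi_0)$ and $p(\chi_0)$ are defined implicitly by $z=2\sin(\chi_0-z/2)$ and $z=2\sin(\chi_0+z/2)$ (equations~\eqref{equa: def f} and~\eqref{equa: def g}); each right-hand side is a contraction in $z$ on the relevant range, so the roots are unique and can be enclosed to arbitrary precision by a few bisection steps. With $f(\chi_0)$ in hand, the pre-detour term $\chi_0+f(\chi_0)$ and the post-detour term
\[
\chi_0+\lambda_0+\sin(\chi_0)-\lambda_0\cos(\phi_0)+\sqrt{\sin^2(\chi_0)+\lambda_0^2\sin^2(\phi_0)}+2p(\chi_0)
\]
are closed-form numbers, and I would simply evaluate them and check that both are below $4.628$. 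The admissibility checks on $\phi_0$ and $\lambda_0$ are recorded at this point as well.

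The substantive step is to bound the two supremum terms, $\sup_{\chi_0-f(\chi_0)\le y\le\psi}\{y+2h(y)\}$ and $\sup_{\psi\le y\le\chi_0}\{y+2h'(y)\}$, where $h$ and $h'$ are the explicit rational-trigonometric functions from Lemma~\ref{lem: calculation of h, line interior} and Lemma~\ref{lem: quantities from algo C}(a), and $\psi$ is the unique root of $h(\psi)=\chi_0+\lambda_0-\psi$ (which exists by the strict monotonicity of $h$ and is again located by bisection). Here I would exploit that $h$ and $h'$ are smooth and, by the cited lemmas, strictly decreasing in $y$, so $y\mapsto y+2h(y)$ and $y\mapsto y+2h'(y)$ are a sum of an increasing linear term and a decreasing term; differentiating the explicit formulas shows each derivative changes sign at most once, so each function is unimodal and its supremum is attained either at an endpoint or at the unique stationary point. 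Evaluating at these $O(1)$ candidate points (equivalently, certifying the bound on a sufficiently fine grid together with a Lipschitz estimate on the derivative) yields the two required inequalities, and continuity of the cost across $y=\psi$ (where $h(\psi)=h'(\psi)$) confirms there is no gap between the subphases.

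The main obstacle is the third step: $h$ and $h'$ are bulky rational expressions in trigonometric quantities, so verifying unimodality and locating the interior maximizer rigorously — rather than merely numerically — is the delicate part. The cleanest route is to reduce the stationarity condition $\frac{d}{dy}(y+2h(y))=0$ to a single transcendental equation and argue monotonicity of one side, and similarly for $h'$; alternatively one certifies the bound by interval arithmetic over $[\chi_0-f(\chi_0),\chi_0]$, which is routine but tedious. In either case, once all four terms are shown to be strictly below $4.628$, Lemma~\ref{lem: cost of piecewise line algo with parameters} gives evacuation time at most $5.628$, as claimed.
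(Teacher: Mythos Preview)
Your approach is essentially the same as the paper's: both invoke Lemma~\ref{lem: cost of piecewise line algo with parameters}, compute the auxiliary constants $f(\chi_0)$, $p(\chi_0)$, $\psi$, and then verify that each of the four terms is below $4.628$. The paper is in fact less careful than you are --- it simply records the numerical values (e.g.\ $f(\chi_0)=1.99603$, $\psi=0.755204$) and the resulting bounds on the four terms, attributing the computations to \textsc{mathematica}, without any unimodality or interval-arithmetic argument. One small overclaim in your write-up: Lemma~\ref{lem: calculation of h, line interior} asserts that $h$ is strictly decreasing, but Lemma~\ref{lem: quantities from algo C}(a) gives only the formula for $h'$ and says nothing about its monotonicity, so that part of your unimodality argument for the subphase-2 supremum would need to be checked directly (or replaced by the interval-arithmetic alternative you mention).
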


\begin{proof}
We examine the cost of our algorithm depending on where the meeting point of the two robots occurs. The guidelines of the analysis are suggested by Lemma~\ref{lem: cost of piecewise line algo with parameters}. Also the deployment cost of 1 will be added at the end. Any calculations below are numerical, and were performed using \textsc{mathematica}.

For the given parameters, we see that $f(\chi_0) = 1.99603, p(\chi_0)=0.506932, \chi_0-f(\chi_0)=0.63584$, and $\psi=0.755204$. If the meeting point is during the pre-detour phase, then the cost is $\chi_0 + f(\chi_0) <4.62791$ (note that $\chi_0<x_0$). If the meeting point is in the post-detour phase, then the cost is $\chi_0 + \lambda_0+\sinn{\chi_0}-\lambda\coss{\phi_0}+\sqrt{\sin^2(\chi_0)+\lambda_0^2\sin^2(\phi_0)} + 2p(\chi_0) < 4.627965$. 

For the more interesting intermediate cases, we have 
\begin{align*}
h(y) &= 
\frac{
-0.5 y^2
+3.45042 y
+ a(y) 
+ b(y) 
-6.61768
}{
y-0.900812 \sin (y)-0.434209 \cos (y)-3.45042
}
\\
h'(y)&=
\frac{
-0.5 y^2+3.12552 y
+(y-3.12552) \sin (y)
- 0.847858 \cos(y)
-5.74387
}{
y
-\sin (y)
-3.12552},
\end{align*}
where $a(y) := (0.900812 y-2.85875) \sin (y)$
and $b(y) := (0.434209 y-2.01566) \cos (y)$.
Hence the cost in the detour subphase-2 is bounded from above by 
$$\max_{0.63584 \leq y \leq 0.755204} \{ y + 2h(y)\} < 4.627972$$ and the cost in the detour subphase-3 is bounded from above by 
$$\max_{0.755204 \leq y \leq 2.631865} \{ y + 2h'(y)\} < 4.627961.$$
This completes the proof of the theorem. 
\end{proof}

\section{Lower Bound}
\label{sec:Lower Bound}

In this section we show that any evacuation algorithm for
two robots in the \ff model takes time at least
$3+\frac{\pi}{6}  + \sqrt{3} \approx 5.255$. 
We first prove a result of independent interest about an evacuation problem on a hexagon. 
\begin{theorem}\label{thm:hexagon}
Consider a hexagon of radius $1$ with an exit placed at an unknown vertex.
Then the worst case evacuation time for two robots starting at any two  arbitrary vertices of the hexagon 
is at least $2 + \sqrt{3}$.
\end{theorem}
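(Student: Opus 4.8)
The plan is to argue that no matter how the two robots move, if the adversary places the exit at the last vertex either robot reaches, the evacuation time is at least $2+\sqrt 3$. First I would normalize: label the vertices $v_0,\dots,v_5$ of the regular hexagon of radius $1$, so consecutive vertices are at distance $1$ (side length equals circumradius), a "short diagonal" has length $\sqrt 3$, and a "long diagonal" (antipodal pair) has length $2$. The two robots start at some pair of vertices; by symmetry I would reduce to the three essentially different starting configurations (same vertex, adjacent vertices, vertices at distance $\sqrt 3$, and antipodal vertices — four cases, but the arguments merge). Think of the exit location as chosen by an adversary \emph{after} seeing the trajectories; equivalently, for each algorithm the cost is $\max_{v} T(v)$ where $T(v)$ is the time both robots have reached $v$.

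The key structural step: let $v^\ast$ be a vertex that is visited by \emph{neither} robot until as late as possible, i.e.\ consider the first time $\tau$ at which only one vertex remains unvisited by \emph{at least one} robot. More precisely, I would track, for each robot $i$, the set $S_i(t)$ of vertices it has visited by time $t$. The exit at $v$ is evacuated only once \emph{both} robots have been to $v$, so the cost is at least the time until $S_1\cup'\, $... — the right quantity is $\min_v \max(t_1(v),t_2(v))$ maximized over $v$, where $t_i(v)$ is the first visit time of robot $i$ to $v$. I would lower-bound this by a two-part argument: (1) some robot, say $R_1$, must visit at least three vertices before time $2$ is impossible to avoid being the "slow" one... — more carefully, consider the moment one robot has visited $5$ of the $6$ vertices; until that moment the remaining vertex is a valid exit for which that robot alone already forces a large cost, and meanwhile the other robot is at some location from which it still must travel to that remaining vertex.

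The cleanest route, which I would pursue, is: among the $6$ vertices, by the time both robots together have "covered" $5$ of them (i.e.\ for $5$ vertices, at least one robot has visited it), consider the sixth vertex $v^\ast$. The evacuation time is at least the time for the robot that has \emph{not yet} visited $v^\ast$ to get there, plus we also get a lower bound from each robot's individual tour length. I would bound the total distance each robot has traveled: a robot visiting $k$ specified vertices of the hexagon has traveled at least the length of the shortest Hamiltonian-type path through those vertices from its start, and I would use the metric facts (side $1$, short diagonal $\sqrt 3$, long diagonal $2$, and the triangle inequality / the fact that any path through three mutually-$\sqrt 3$-apart vertices has length $\ge 2\sqrt 3$) to show that whichever robot ends up responsible for $v^\ast$ must have traveled at least $2+\sqrt 3$ by the time it reaches $v^\ast$ — or, if it reaches $v^\ast$ early, then the \emph{other} robot still has to come, and that robot's last-visited vertex is far from $v^\ast$. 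Balancing these two scenarios gives the bound $2+\sqrt 3$, with equality traced to a concrete pair of trajectories.

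I expect the main obstacle to be the case analysis on starting positions together with the adversary's freedom: one must show the bound holds \emph{simultaneously} against all splitting strategies, which requires carefully choosing \emph{which} vertex the adversary designates as the exit as a function of the (continuous) trajectories, and then showing that for that choice one robot is necessarily "behind" by at least $2+\sqrt 3 - (\text{time already elapsed})$. Making the "shortest path through a set of hexagon vertices" lower bounds tight and combining them without slack — in particular ruling out clever interleavings where the two robots share the work on the short diagonals — is the delicate part; I would handle it by a potential/charging argument on the three long diagonals, noting each long diagonal's two endpoints can be "finished" only by a robot that either traverses a length-$2$ segment or makes two separate trips, and there are three such diagonals to service with two robots.
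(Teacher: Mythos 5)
There is a genuine gap. Your sketch treats the problem as if the adversary merely has to find a vertex that is ``late'' in a pair of fixed trajectories, and you propose to certify this with tour-length/Hamiltonian-path lower bounds and a charging argument on the long diagonals. But the robots are adaptive and can communicate face-to-face: the moment one robot finds the exit, it can abandon its exploration and chase the other robot to redirect it. So the quantity you want to bound is not determined by the no-exit trajectories alone, and your key scenario --- ``if a robot reaches $v^\ast$ early, the other robot still has to come, and its last-visited vertex is far from $v^\ast$'' --- is exactly the point where the argument can fail: the finder might intercept the second robot long before that robot commits to visiting a far-away vertex. The paper's proof lives or dies on closing this hole. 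It runs the algorithm with no exit, lets $t$ be the time the fifth explored vertex $v_5$ is reached (so $t\ge 2$), and uses two inputs: exit at the last vertex $v_6$ (which suffices unless $t<1+\sqrt3$ and $v_6$ is adjacent to $v_5$), and otherwise exit at a vertex $v^\ast$ non-adjacent to $v_5$ that was explored by the \emph{other} robot $R_2$ at some time $t^\ast\ge 1$. The decisive step is a separate claim that when $t<1+\sqrt3$ the two robots \emph{cannot meet} between $t^\ast$ and $t$ (any meeting point $P$ would give $\sqrt3\le \overline{v^\ast P}+\overline{P v_5}\le t-t^\ast\le t-1$, a contradiction), so $R_1$ remains oblivious, is provably at $v_5$ at time $t\ge 2$, and still needs $\ge\sqrt3$ to reach the exit. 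Nothing in your proposal plays the role of this no-meeting claim.

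A secondary issue is that even the non-adaptive part of your plan is not pinned down: you never commit to which vertex the adversary designates (the paper's choice --- $v_6$, or else the last of the three vertices non-adjacent to $v_5$ in $R_2$'s visiting order --- is quite specific and is what makes the distance $\sqrt{3}$ available), and the proposed ingredients (shortest paths through vertex subsets, servicing three long diagonals with two robots) are not assembled into an inequality that yields $2+\sqrt3$. The case split on starting configurations is also unnecessary; the paper's argument is uniform in the starting vertices because it only uses $t\ge 2$ (one robot has traversed at least two unit edges) and pairwise distances among the hexagon's vertices. I would recommend restructuring your proof around an explicit indistinguishability argument with two adversarial inputs and proving the no-meeting claim; without it the bound does not follow.
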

\begin{proof}
Assume an arbitrary deterministic algorithm $\mathcal D$ for the problem. 
$\mathcal D$ solves the problem for any input, i.e., any placement of the exit.
We construct two inputs for $\mathcal D$ and show that for at least one of them, the required evacuation time is at least $2 + \sqrt{3}$.
First, we let $\mathcal D$ run without placing an exit at any vertex, so as to 
find out in which order the robots are exploring all the vertices
of the hexagon. We label the vertices of the hexagon according to this order (if two vertices are explored simultaneously then we just order them arbitrarily).
	Let $t$ be the time when the fifth vertex, $v_5$, of the hexagon is visited by some robot, say $\RA$, i.e., robot $\RA$ is at vertex $v_5$ at time $t$, and four more vertices of the hexagon have already been visited. 
In other words, $v_5$ and $v_6$ are the only vertices\footnote{It might be that $v_4$ and $v_5$ are explored simultaneously, or that $v_5$ and $v_6$ are explored simultaneously. In the former case $v_6$ is explored strictly after $v_5$ while in the latter $v_4$ is explored strictly before $v_5$.} 
that are guaranteed to not have been explored at time $t-\varepsilon$, for any sufficiently small $\varepsilon > 0$.
Note that we must have $t\geq2$, since at least one of the two robots must have visited at least three vertices by time $t$ (and hence must have walked at least the 
two segments between the first and the second, and between the second and the third vertex visited in its trajectory).
	
The first input $I_1$ we construct has the exit placed at vertex $v_6$. 
Until robot $\RA$ reaches $v_5$, the algorithm $\mathcal D$ 
processes $I_1$ identically to the case in which there is no exit; further, at time $t$, robot $\RA$ needs additional time at least $1$ just to reach the exit.
If $t \geq 1 + \sqrt{3}$, then this input gives an evacuation time of at least $2 + \sqrt{3}$. 

\begin{figure}[ht]
                \centering
                \includegraphics[scale=0.7]{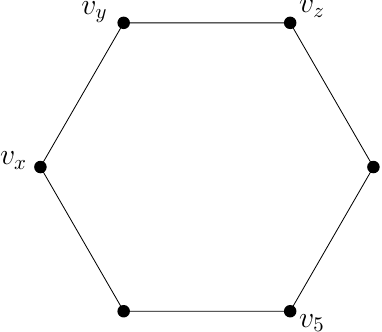}
                \caption{Vertices of the hexagon as visited by algorithm $\mathcal D$; $t$ is the time when the fifth vertex $v_5$ is visited by some robot, say $\RA$. One of the vertices adjacent to $v_5$ has not been visited yet by a robot (here $v_6 \not\in \{v_x,v_y,v_z\}$).}
                \label{fig: hex}
\end{figure}
Hence 
assume that $2 \leq t < 1 + \sqrt{3}$. Let $v_x$, $v_y$, and $v_z$ be the three vertices that are non-adjacent to $v_5$ in the hexagon (see Figure~\ref{fig: hex}).
Note that the minimum distance between $v_5$ and any of $v_x$, $v_y$, and $v_z$ is at least $\sqrt{3}$.
If $v_6 \in \{v_x,v_y,v_z\}$ then on input $I_1$, $\mathcal D$ needs evacuation time at least $t+\sqrt{3} \geq 2+\sqrt{3}$, as $\RA$ still has to 
reach the exit. 

Therefore, 
assume that $v_6 \not\in \{v_x,v_y,v_z\}$ and hence 	$\{v_x,v_y,v_z\} \subset \{v_1,v_2,v_2,v_4\}$.
Note that, since $t < 1+ \sqrt{3}$, on input $I_1$, robot
$\RA$ has visited at most one of $v_x$, $v_y$, and $v_z$ at time $t$. 
Hence, the other robot, $\RB$ has visited at least two of them.
For the second input $I_2$ that we construct, we place the exit on the vertex $v^*$ that is the last vertex among $v_x$, $v_y$, and $v_z$ in the visiting order of the vertices by $\RB$.
Let $t^*$ be the time when $\RB$ reaches $v^*$, and note that at least until $t^*$, the algorithm $\mathcal D$ behaves identical on the two inputs $I_1$ and $I_2$.
As $\RB$ has visited at least one vertex before visiting $v^*$, we have $t^* \geq 1$.

Next we claim that $\RA$ and $\RB$ cannot meet between time $t^*$ and $t$.
The claim below states that this is impossible. Namely, we can prove:
\begin{claim}\label{claim:hex}
Let $t$, $t^*$, and $I_2$ be as defined in the preceding paragraphs. 
If $t < 1+ \sqrt{3}$, then on input $I_2$, $\RA$ and $\RB$ do not meet between time $t^*$ 
and time $t$.
\end{claim}
\begin{proof} (Claim~\ref{claim:hex})
Assume on the contrary, that
on input $I_2$, $\RA$ and $\RB$ do meet at some time $t'$ at point $P$, with
$t^* \leq t' < t$.
Observe that on input $I_2$, robot $\RA$ continues until time $t'$ as on input $I_1$ but having met $\RB$ at time $t'$ might continue differently after time~$t'$.
Let $t_B=t'-t^*$ be the time that $\RB$ uses on input $I_2$ to get from the exit $v^*$ to $P$,  
and let $t_A = t-t'$ be the time that $\RA$ uses on input $I_1$ to get to vertex $v_5$ from $P$.  As $v^*$ and~$v_5$ are at distance at least $\sqrt{3}$, and since $t^* \geq 1$, we have
$\sqrt{3} \leq t_A + t_B = t-t' + t'-t^* = t-t^* \leq t-1$.  So we obtain $\sqrt{3}+1 \leq t$, which contradicts the assumption that $t < 1+\sqrt{3}$.
This proves the claim. 
\end{proof}

Having proved the claim, we conclude that on input $I_2$, 
$\RA$ continues until time $t$ as on input $I_1$. Hence $\RA$ needs at least $t+\sqrt{3} \geq 2+\sqrt{3}$ time to reach the exit on input $I_2$,
which completes the proof of the theorem.
\end{proof}

It is worth noting that the lower bound from Theorem~\ref{thm:hexagon} matches the upper bound of evacuating a regular hexagon, when the initial starting vertices may be chosen by the algorithm. 
Consider a hexagon $ABCDEF$ and suppose that the trajectory of one robot, as long as no exit was found, is $ABDC$. 
Similarly, the other robot follows the symmetric trajectory $FECD$; cf.\ left-hand side of \figurename~\ref{fig:hex_ub}. 
By symmetry it is sufficient to consider exits at vertices $A$, $B$, or $C$. 
An exit at $C$ is reached by each robot independently, while both robots proceed to an exit at $A$ or $B$ after meeting at point $M$, the intersection of the segments $BD$ and $EC$. 
Altogether, they need a total time of at most $\max\{1 + 4/\sqrt(3),~ 1 + (2+\sqrt{7})/\sqrt{3},~ 1+\sqrt{3}+1\}$ to evacuate from the hexagon. 
An interested reader may verify that, in each case, the evacuation time of this algorithm is always upper bounded by $2+\sqrt{3}$.

\begin{figure}[!ht]
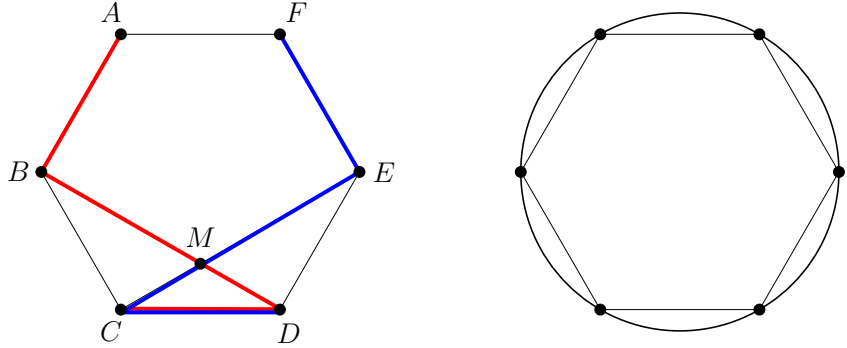

                \centering
 \vspace{-2ex}
 \includegraphics[scale=0.7,page=3]{FigsEvac/hex1.pdf} \hspace{1cm}
                \includegraphics[scale=0.7,page=2]{FigsEvac/hex1.pdf}
				\caption{The trajectories for \RA~(red) and \RB~(blue) for the hexagon evacuation algorithm having evacuation time $2+\sqrt{3}$, 					while the exit has not been found, are depicted on the left.
					Right-hand side: At time $1 + \frac{\pi}{6} - \varepsilon$, there is regular hexagon all of whose vertices are unexplored and lie on the boundary of the disk.
				}
				\label{fig:hex_ub}
\end{figure}

In the above algorithm, the robots meet at M, regardless of  whether the exit has been already found or not. 
The idea of our algorithm for disk evacuation presented in the previous section was influenced by this non-intuitive presence of a forced meeting. \\
Combining Theorem~\ref{thm:hexagon} with the fact that hexagon edges correspond to arcs of length $\pi/6$, we obtain the following lower bound for our evacuation problem.

\begin{theorem}
Assume you have a unit disk with an exit placed somewhere on the boundary. 
The worst case evacuation time  for two robots starting at the centre in the \ff model 
  is at least $3+\frac{\pi}{6} + \sqrt{3} \approx 5.255$. 
\end{theorem}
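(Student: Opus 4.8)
The plan is to transfer the hexagon lower bound of Theorem~\ref{thm:hexagon} to the disk by a time-shift and a measure-theoretic averaging argument. Observe first that two robots starting at the centre $O$ need time exactly $1$ to reach the boundary, and at time $1+\frac{\pi}{6}-\varepsilon$ the total distance each robot could have travelled along the boundary after reaching it is at most $\frac{\pi}{6}-\varepsilon$, so the portion of the circle that has been explored by the two robots together is an arc set of total length at most $2(\frac{\pi}{6}-\varepsilon)=\frac{\pi}{3}-2\varepsilon$. Since the full circle has length $2\pi$, the unexplored part has measure at least $2\pi-\frac{\pi}{3}+2\varepsilon>\frac{5\pi}{3}$. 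I would use this to argue that there is a rotation (equivalently, a choice of where to ``pin'' a regular hexagon inscribed in the circle, its six vertices equally spaced at arc-distance $\frac{\pi}{3}$) so that all six vertices of that hexagon lie in the unexplored set at time $1+\frac{\pi}{6}-\varepsilon$ — this is exactly the content of the right-hand side of \figurename~\ref{fig:hex_ub}. Concretely, parametrize the circle by $[0,2\pi)$, let $U\subseteq[0,2\pi)$ be the (at time $1+\frac{\pi}{6}-\varepsilon$) unexplored set, and consider the indicator sum $g(\alpha)=\sum_{j=0}^{5}\mathbf{1}_U\!\left(\alpha+\frac{j\pi}{3}\right)$; averaging $g$ over $\alpha\in[0,\frac{\pi}{3})$ gives $\frac{6}{2\pi/3}\int_0^{2\pi}\mathbf{1}_U \ge \frac{9}{\pi}\cdot\frac{5\pi}{3}>5$, so for some $\alpha$ we get $g(\alpha)=6$, i.e. all six vertices are unexplored.

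Next I would fix such a hexagon $\mathcal H$. Up to time $1+\frac{\pi}{6}-\varepsilon$ the behaviour of the algorithm is completely determined independently of the exit location (the robots have seen nothing of the boundary near $\mathcal H$), so I can regard the configuration at that moment — the positions of \RA\ and \RB\ on the disk — as the starting configuration of a hexagon-evacuation instance, \emph{except} that the two robots are not at vertices of $\mathcal H$ but at two arbitrary points of the disk, and they still have to reach a vertex of $\mathcal H$. The key point is that any evacuation strategy for the disk, restricted to exits on the six vertices of $\mathcal H$, induces an evacuation strategy for the hexagon problem with the two robots started from these arbitrary interior points; but Theorem~\ref{thm:hexagon} was stated for robots starting at two arbitrary \emph{vertices}, so I need the slightly stronger observation that its proof in fact only used that each robot needs at least time $1$ before it has visited a third vertex (equivalently, before it can be at a vertex ``far'' from where the fifth vertex is pinned). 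I would re-examine the proof of Theorem~\ref{thm:hexagon}: the only places the starting points entered were the bounds $t\ge 2$ and $t^*\ge 1$, and both came from counting how many vertex-to-vertex hops a robot must have completed. Starting from an arbitrary point only makes reaching vertices \emph{slower}, so the same bounds (and hence the $2+\sqrt3$ conclusion) still hold; thus from the configuration at time $1+\frac{\pi}{6}-\varepsilon$ onward, some placement of the exit at a vertex of $\mathcal H$ forces an additional $2+\sqrt3$ of evacuation time.

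Putting the pieces together: for every $\varepsilon>0$ there is an exit location (a vertex of a suitably rotated hexagon) for which the evacuation time is at least $\left(1+\frac{\pi}{6}-\varepsilon\right)+\left(2+\sqrt3\right)=3+\frac{\pi}{6}+\sqrt3-\varepsilon$. Since the adversary picks the worst exit and $\varepsilon$ is arbitrary, the worst-case evacuation time is at least $3+\frac{\pi}{6}+\sqrt3\approx 5.255$, as claimed. The main obstacle I anticipate is the second step — making rigorous that Theorem~\ref{thm:hexagon}'s $2+\sqrt3$ bound survives when the robots start from arbitrary interior points rather than vertices, and in particular handling the edge effect that at time $1+\frac{\pi}{6}-\varepsilon$ the robots may be anywhere on the disk (including near $\mathcal H$ but not on it) and that the ``no communication has happened about the exit'' assumption is used to glue the two time intervals; one must check that the adversary's two-input argument in Theorem~\ref{thm:hexagon} can be run verbatim with these modified starting conditions, which is why I would first isolate exactly which inequalities that proof relies on.
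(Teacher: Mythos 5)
Your proposal is correct and follows essentially the same route as the paper: by time $1+\frac{\pi}{6}-\varepsilon$ at most $\frac{\pi}{3}$ of the boundary is explored, so some inscribed regular hexagon has all six vertices unexplored, and Theorem~\ref{thm:hexagon} then contributes the additional $2+\sqrt{3}$ (the paper's proof is just terser, leaving implicit both the rotation-averaging argument and the check that the hexagon bound survives non-vertex starting positions, which you rightly identify as the only real subtlety). One harmless slip: the average of $g$ over $\alpha\in[0,\frac{\pi}{3})$ is $\frac{3}{\pi}\int_0^{2\pi}\mathbf{1}_U > 5$, not $\frac{9}{\pi}\int_0^{2\pi}\mathbf{1}_U$ (which would exceed the maximum value $6$ of $g$); the conclusion that some $\alpha$ gives $g(\alpha)=6$ still stands.
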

\begin{proof}
It takes 1 time unit for the robots to reach the boundary of the disk.
By time $t=1+\frac{\pi}{6}$,
any algorithm could have explored at most $\frac{2\pi}{6}$ of the boundary of the disk. 
Hence for any $\varepsilon$ with $0 < \varepsilon < t$, there exists a regular hexagon with all vertices on the boundary of the disk and all of whose vertices are unexplored at time $t-\varepsilon$; see the right-hand side of Figure~\ref{fig:hex_ub}. 
As the exit might as well be on any vertex of this hexagon, invoking Theorem~\ref{thm:hexagon} implies a lower bound of $1+\frac{\pi}{6} + 2 + \sqrt{3}$ for the worst case of evacuating both robots. 
\end{proof}

Our lower bound technique could be extended to $n\geq 7$, and it might 
improve the lower bound value. However, the case analysis becomes 
extensive,
very complicated, and it would not provide any more insight into the 
lower bound question.

\section{Conclusion}
\label{sec:Conclusion}

In this paper we studied evacuating two robots from a disk, where the robots 
can collaborate using \ff communication.
Unlike evacuation for two robots in the wireless communication
model, for which the tight bound $1 + \frac{2\pi}{3}+ \sqrt{3}$
is proved in~\cite{CGGKMP}, the evacuation problem for two robots in
the \ff model is much harder to solve.
We gave a new non-trivial algorithm for the  \ff communication
model, by this improving the upper bound of 5.740
in~\cite{CGGKMP} to 
5.628
(the upper bound has since been improved 
to 5.625 in~\cite{Watten2017}
and to 5.6234 in~\cite{disser2019evacuating}
).
We used a novel, non-intuitive idea of a forced meeting between the robots, 
regardless of whether the exit was found before the meeting (although~\cite{Watten2017,disser2019evacuating} have since demonstrated that by avoiding a forced meeting one can improve the upper bound). 
We also provided a different analysis that improves the lower bound in~\cite{CGGKMP}.
Further tightening of the upper and lower bounds remains a challenging open question. Along the same lines, results pertaining to the generalization of our auxiliary problem of evacuating from an $n\textrm{-gon}$ is another interesting open problem.

\acknowledgements
\label{sec:ack}
This work was initiated during the \emph{$13^{th}$ Workshop on Routing} held in July 15--19,
2014 in Quer\'{e}taro, M\'{e}xico.
The authors would also like to thank the anonymous reviewers who, during the journal evaluation process, had a number of suggestions that improved the manuscript. 

\nocite{*}
\bibliographystyle{abbrvnat}
\bibliography{refs-August2019}
\label{sec:biblio}

\end{document}